\documentclass[11pt]{article}
\usepackage{amssymb,amsmath,amsthm,mathtools,wasysym,calc,verbatim,enumitem,tikz,pgfplots,hyperref,url,mathrsfs,fullpage,bbm,comment, times, float}
\usepackage{tikz-cd}
\usetikzlibrary{arrows.meta,calc,positioning,decorations.pathreplacing}
\definecolor{dirblue}{RGB}{36,91,146}
\definecolor{dirorange}{RGB}{183,93,26}
\definecolor{dirgreen}{RGB}{31,119,91}
\definecolor{dirpurple}{RGB}{117,74,151}
\usepackage{booktabs}
\usepackage[noadjust]{cite}
\usepackage{comment,fullpage}
\usepackage[normalem]{ulem}
\mathtoolsset{showonlyrefs}
\pgfplotsset{compat=1.18}
\definecolor{ink}{HTML}{26333D}
\definecolor{muted}{HTML}{63717C}
\definecolor{teal}{HTML}{247D87}
\definecolor{ochre}{HTML}{B56D28}
\definecolor{plum}{HTML}{86618C}
\definecolor{hairline}{HTML}{B7C0C6}

\tikzset{
  every picture/.style={x=1cm,y=1cm,draw=ink,text=ink,
    line cap=round,line join=round,font=\small},
  point/.style={circle,fill=ink,draw=none,inner sep=1.6pt},
  edge/.style={line width=.75pt},
  accent/.style={line width=1.15pt},
  guide/.style={draw=muted,line width=.6pt,-{Stealth[length=2mm]}},
  note/.style={font=\footnotesize,text=muted,align=center},
  figcaption/.style={font=\small,text=ink,align=left,
    anchor=north west,inner sep=0pt},
  title/.style={font=\large\bfseries,anchor=west}
}

\newtheorem{theorem}{Theorem}
\newtheorem{definition}[theorem]{Definition}

\newtheorem{lemma}[theorem]{Lemma}
\newtheorem{claim}[theorem]{Claim}

\newtheorem{conj}[theorem]{Conjecture}
\newtheorem{fact}[theorem]{Fact}

\newtheorem*{claim*}{Claim}

\theoremstyle{remark}
\newtheorem*{remark*}{Remark}
\newtheorem{remark}[theorem]{Remark}

\usepackage{thmtools, thm-restate}

\numberwithin{theorem}{section}

\renewcommand{\phi}{\varphi}

\renewcommand{\leq}{\le}
\renewcommand{\geq}{\ge}

\newcommand{\cG}{\mathcal G}

\newcommand{\cC}{\mathcal C}

\usepackage{cancel}

\newcommand{\cT}{\mathcal T}

\newcommand{\E}{\mathbb{E}}

\def\1{\mathbbm{1}}

\newcommand{\diag}{\operatorname{diag}}

\newcommand{\supp}{\operatorname{supp}}

\def\rate{\mathrm{rate}}

\usepackage{tikz-cd}
\usetikzlibrary{positioning,arrows.meta,calc}

\renewcommand{\le}{\leqslant}
\renewcommand{\ge}{\geqslant}
\renewcommand{\P}{\mathbb{P}}

\newcommand{\PP}{\mathbb P}

\DeclareMathOperator{\PGL}{PGL}
\DeclareMathOperator{\GL}{GL}
\DeclareMathOperator{\SL}{SL}
\DeclareMathOperator{\val}{val}

\newcommand{\x}{\times}

\newcommand{\FF}{\mathbb{F}}

\usepackage{thm-restate}
\usepackage{array}



\newcommand{\F}{\mathbb{F}}
\newcommand{\cO}{\mathcal{O}}

\newcommand{\wt}{\operatorname{wt}}
\newcommand{\type}{\operatorname{type}}

\newcommand{\PRS}{\operatorname{PRS}}
\newcommand{\RS}{\operatorname{RS}}

\title{Good Quantum Locally Testable Codes from Product Expansion}
\author{
Mitali Bafna\textsuperscript{1}\thanks{mitalib@cs.washington.edu}
\qquad
Anqi Li\textsuperscript{1}\thanks{aqli@cs.washington.edu}
\qquad
Quynh T. Nguyen\textsuperscript{2}\thanks{thequynhqb@pm.me}
\\[0.5em]
{\normalsize
\textsuperscript{1}University of Washington
\qquad
\textsuperscript{2}UC Berkeley
}
}

\date{\vspace{-5ex}}

\begin{document}

\maketitle

\begin{abstract}
We construct quantum locally testable codes (LTCs) with constant rate, distance, soundness and locality under a variant of a product expansion conjecture of Bafna and Vyas~\cite{BafnaV} about Reed-Solomon codes. In particular, we use the high-dimensional expansion framework of Dinur, Lin and Vidick~\cite{DLV} for constructing quantum LTCs, instantiated with the non-Abelian cubical complexes of~\cite{RSV19}. Our code is obtained by equipping the complex with carefully chosen Reed-Solomon local codes whose symmetries are compatible with those of the complex. 
\end{abstract}

\paragraph{Statement of AI Use:} 
Motivated by the new conjectural construction of product expanding Reed-Solomon codes, the authors prompted GPT 6 to construct quantum LTCs using the framework of~\cite{DLV}, along with the non-Abelian cubical complexes considered in~\cite{HLMRZ}, assuming the product expansion conjecture of~\cite{BafnaV}. The model provided a construction that uses a variant of the conjecture of~\cite{BafnaV}. We spent significant effort in understanding the draft, extracting out the main ideas for our exposition and tracing many of the ideas to their origin in the literature. Upon further prompts, GPT 6 also provided a proof of the product expansion conjecture, but we did not verify its correctness, and hence have not included it in our manuscript. We are happy to share it with anyone who is interested.

%\tableofcontents

\section{Introduction}
Locally testable codes are error correcting codes that have a local tester that can probabilistically check whether a word is in the code using only few queries.

\begin{definition}[Locally testable codes]
For $\kappa>0$ and $q\in\mathbb N$, a linear code
$C\subseteq\mathbb F^n$ is \emph{$\kappa$-locally testable with $q$ queries}
if it is defined by a collection of parity checks, each involving at most
$q$ coordinates, such that every codeword is accepted with probability 1 and every word $f\in\mathbb F^n$ at relative distance $\Delta$ from $C$, violates at least
a $\kappa\,\Delta$-fraction of the checks.
\end{definition}
These originated in the celebrated works on program checking~\cite{BLR} and probabilistically checkable proofs~\cite{BFLS,ALMSS, AS}, in the form of low-degree testing of polynomials. Goldreich and Sudan~\cite{GoldreichS02} initiated a systematic study of LTCs, following which a long line of work~\cite{GoldreichS02,
Ben-SassonSVW03,Ben-SassonGHSV06,Ben-SassonS08pcp,Dinur07,
KoppartyMRS17,GopiKORS18} yielded constructions with \emph{nearly optimal} parameters. Although, achieving `good' LTCs, those with constant rate, relative distance, and locality simultaneously, remained open.
 
In a parallel vein, quantum low-density parity-check (qLDPC) codes
with good rate and distance have long been sought for their potential in performing efficient and fault-tolerant quantum computation~\cite{Gottesman14}.
A long line of work~\cite{TillichZ14,EvraKZ20,HastingsHO21,
PanteleevK22Distance,BreuckmannE21} developed constructions with
increasingly strong parameters.

The seminal works of~\cite{DinurELLM22} and~\cite{PanteleevK22} resolved the questions of constructing good LTCs and good quantum LDPC codes respectively, using a novel method based on high-dimensional expansion. These build upon the expander codes of~\cite{SipserS94}, that constructed good classical LDPC codes by defining a codespace on the edges of an expander graph by adding parity checks using good \textbf{`local'} codes at every vertex (found by brute force). They showed how the distance of the local codes translates to distance of the global code using graph expansion. These new works instead used a \emph{cubical complex}, which can be thought of as a graph that contains not just edges, but also squares (or 2-dimensional faces). They define a global code on the squares, by imposing local codes on the edges in a way so that the local view at a vertex is a tensor codeword. Here they used special local codes so that the resulting tensor codes has not only have good rate and distance, but  also is `robustly' locally testable. They then  lifted this local property to local testability of the full code building upon the local-to-global theorems of~\cite{KaufmanKazLub, EvraKaufman}. We will refer to this property of the local codes as \textbf{product expansion}, as coined by~\cite{PanteleevK22, KalachevP25} (see Definitions~\ref{def:prod-expansion-intro} and~\ref{def:product-expansion}).

This brings us to quantum LTCs that were introduced by Aharonov and Eldar~\cite{AE15} in connection with the quantum PCP conjecture~\cite{AharonovN02, aharonov2013guest}, formally defined as, 

\begin{definition}[Quantum locally testable CSS codes]
A CSS code $\mathcal Q=\operatorname{CSS}(C_X,C_Z)$ is specified by
two linear codes $C_X,C_Z\subseteq\mathbb F^n$ satisfying
$C_X^\perp\subseteq C_Z$. The parity checks defining $C_X$ and $C_Z$
give its $X$- and $Z$-type checks, respectively.
For $\kappa>0$ and $q\in\mathbb N$, we call the code $\mathcal Q$
\emph{$\kappa$-locally testable with $q$ queries}
if both $C_X$ and $C_Z$ are $\kappa$-locally testable with $q$ queries.
\end{definition}

Constructions of qLTCs with progressively improved parameters were obtained in~\cite{hastings2016quantum, leverrier2022towards, cross2024quantum}, and the currently best known parameters are by Dinur, Lin, and Vidick~\cite{DLV} who constructed almost-good qLTCs, achieving constant rate and locality with inverse-polylogarithmic relative distance and soundness. Their construction uses the product expansion of \emph{random codes} proved by~\cite{KalachevP25}. These almost-good qLTCs have been used in low-overhead fault-tolerant quantum computation~\cite{nguyen2025quantum}. Analogous to classical LTCs, qLTCs have implications for the quantum PCP conjecture and related problems: they  imply the NLTS theorem~\cite{EH17, ABN23}, quantum interactive oracle proofs~\cite{SunV26} and quantum multiparty computation as well as a `gap amplification' procedure for quantum CSPs in recent work by two of the authors~\cite{BNZ}. In all of these applications, good qLTCs yield better parameters than almost-good qLTCs.

\subsection{Our Work}
Motivated by the quantum PCP conjecture, a recent work by~\cite{BafnaV} constructed `private' PCPs and enroute to that gave a construction of improved quantum codes supporting multiplication. This result relies on a conjecture about the product expansion of carefully chosen Reed-Solomon (RS) codes (see Section~\ref{sec:pe-intro}), that if true would be the first structured codes apart from random codes that would satisfy this property. They used this conjecture to construct tensor product RS codes (see Corollary 4.5 therein) which give rise to quantum codes with near-linear rate and distance, sparse $X$ checks giving local testability, but having dense $Z$ checks. 

A natural question that arises is whether one can use these structured RS codes as local codes instead in the high-dimensional framework of~\cite{DLV} to obtain better parameters. Indeed, doing so we are able to build good qLTCs under a variant of the product expansion conjecture of~\cite{BafnaV},
\begin{theorem}[Informal (see Theorem~\ref{cor:main})]\label{thm:main}
Assuming Conjecture~\ref{conj:binary-pe}, there exists an explicit family of quantum locally testable CSS codes over a binary alphabet with constant rate, quantum distance, soundness and locality.
\end{theorem}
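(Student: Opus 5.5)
The plan is to instantiate the Dinur--Lin--Vidick framework~\cite{DLV} as a black box. Their main theorem (in its general form) says the following. Suppose one has a $4$-dimensional (or $3$-dimensional, depending on which of the two DLV variants is used) cubical complex built from $k$ commuting-style Cayley-type generating sets, with all link graphs being good spectral expanders, and local codes $C_1,\dots,C_k$ on the generator sets. If the local codes and their duals have constant rate and distance, and all the relevant tuples $(C_{i_1},\dots,C_{i_t})$ and dual tuples are $\rho$-product expanding with constant $\rho$, then the resulting chain complex yields a CSS code. This code has constant rate, relative distance, and soundness, with locality depending only on the local code lengths. In~\cite{DLV} the only non-constant parameter came from using random codes, whose product expansion was only inverse-polylogarithmic at the needed lengths. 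So the proof reduces to supplying local codes with constant product expansion together with a complex compatible with them.

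\textbf{Step 1: the complex.} Take the non-Abelian cubical complexes of~\cite{RSV19} (as used in~\cite{HLMRZ}). These come from a group $G$, e.g.\ $\PGL_2$ over a finite field or a lattice quotient, acting with $t$ generating sets $A_1,\dots,A_t$ satisfying the cubical (square) compatibility conditions and giving Ramanujan-type expansion. We check that their links have the spectral gap DLV require, uniformly in the family, with fixed generator size $|A_i|=n$.

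\textbf{Step 2: local codes.} Identify each generator set $A_i$ with a set of evaluation points, e.g.\ points of the projective line $\PP^1(\F_q)$ on which the local group acts. Place on it a Reed--Solomon (or punctured/projective RS) code of rate bounded away from $0$ and $1$. Choose the degree so that the code is invariant under the symmetries the complex imposes, since the DLV local-to-global argument needs the identification of each link with $A_{i_1}\times\dots\times A_{i_t}$ to be well-defined up to automorphisms preserving the local codes. Duals of RS codes are generalized RS codes, so duals also have constant rate and distance. Conjecture~\ref{conj:binary-pe} then supplies constant product expansion for the tuples of these codes and of their duals.

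\textbf{Step 3: binary alphabet.} RS codes live over $\F_q$ with $q\ge n$. To get a binary qLTC we either invoke the binary form of the conjecture directly, as its name suggests, e.g.\ via subfield subcodes or concatenation with a fixed binary code while preserving product expansion, or we apply a standard alphabet-reduction that preserves CSS structure and local testability up to constants. With $n$ fixed, all of these losses are constant.

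The main obstacle is Step 2's compatibility. The non-Abelian complex forces local views to be twisted by group elements, so the local codes must be invariant under exactly the permutations induced on $A_i$. Additionally, the dual-code product expansion demanded by the cochain (not just chain) side must also hold for those twisted codes. First I would try choosing $A_i$ as orbits of a cyclic or affine subgroup acting on $\F_q$, where RS codes are affine-invariant, and verify the DLV hypotheses (including the ``exactness'' conditions on the local chain complexes) by direct computation. With those hypotheses checked, the rest is parameter bookkeeping in the DLV theorem.
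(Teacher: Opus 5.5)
\textbf{The compatibility step has a genuine gap.} Your skeleton (RSV complex, projective RS local codes, DLV as a black box, then expansion over $\F_2$) matches the paper's. However, the step you flag as the main obstacle is resolved in a way that cannot work.

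In the RSV complex you do not choose the label transports. Across an edge they are M\"obius maps $a\mapsto\langle B_e^{(j)}r_a\rangle$ on $\PP^1(\F_{s_j})$, and Lemma~\ref{lem:construction}(3) only gives $B_e^{(j)}\in\SL_2(\F_{s_j})$. Neither your proposal nor the construction confines all transports to a common Borel or torus subgroup, which affine or cyclic invariance would need. So there are no ``affine orbits'' to pick.

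No degree choice rescues invariance either. For $1\le d\le s-2$, the codeword of $Y^d$ in $\PRS_d$ is $1$ on $\F_s$ and $0$ at $\infty$. Permuting by any M\"obius map that moves $\infty$ gives a word vanishing at exactly one finite point $x_0$. A degree-$d$ preimage $g$ would then make $g(x,1)-1$, of degree $\le s-2$, vanish on $s-1$ points but not at $x_0$, which is impossible.

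PRS codes have only the \emph{scaled} symmetry of Claim~\ref{lem:prs-mobius}, and even that does not directly give tensor local views. Along a direction-$i$ edge, the direction-$i$ scalars at the two endpoints are $p\,\eta$ and $p'\eta^{-1}$, where $p,p'$ depend only on the edge and $\eta$ depends on the $4$-cube. The naive weights $\prod_i c_i(v,Q)^{d_i}$ therefore leave a cube-dependent factor $\eta^{2d_i}$. This factor is trivial only if $s_i-1\mid d_i$, since $s_i-1$ is odd.

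\textbf{The missing idea is the paper's twisted local code system.} It uses the frames $F_i(v,Q)$ of Lemma~\ref{lem:construction}(4), which satisfy $F_j(w,Q)=B_e^{(j)}F_j(v,Q)$ transversally and are triangular relative to edge-only matrices $P_{v,e},P_{w,e}$ along the edge. Each $3$-face then carries the rescaled, relabeled code $D_\tau P^{t_i}C_i$. The weights are $\nu_v(Q)=\prod_i c_i(v,Q)^{2^{ht_i}d_i}$, and the relabeling $x\mapsto x^{2^h}$ is applied at type-$1$ bits.

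The exponent mismatch turns $\eta^{2d_i}$ into $\eta^{(2^h+1)d_i}$, which equals $1$ once $s_i-1\mid(2^h+1)d_i$. Because Frobenius is a field automorphism, the M\"obius transport passes through the relabeling. Only then is every local view a tensor codeword up to monomial equivalence (Lemma~\ref{lem:local-tensor-four-dimensional}), and monomial equivalence preserves product expansion.

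\textbf{Your ``parameter bookkeeping'' also hides real constraints.}
\begin{itemize}
\item The rate bound needs two codes of rate below $\mu$ and two above $1-\mu$, not just rates bounded away from $0$ and $1$. The paper takes $h=4$ and $d_i=(s_i-1)/17$ or $16(s_i-1)/17$, with $8\mid m_i$ so that $17\mid s_i-1$.
\item To invoke Conjecture~\ref{conj:binary-pe}, you must show $\PRS_d(\F_s;\E)$ is monomially equivalent to $\RS_d$ on the order-$(s+1)$ subgroup of $\F_{s^2}^\times$.
\item You must also choose the $m_i$ so that the $s_i+1$ are pairwise coprime with bounded ratio.
\end{itemize}
Finally, DLV's polylog loss comes from Abelian expanders needing $\Theta(\log n)$ generators, not from weak product expansion of random codes. ``Binary'' in the conjecture refers to the field $\F_{2^e}$. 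Your alphabet-reduction option, expanding over an $\F_2$-basis of the fixed field $\E$, is what the paper does.
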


We cross the polylog-bottleneck in~\cite{DLV} by using \emph{non-Abelian} cubical complexes instead of Abelian ones. Using these introduces the difficulty that the local views are tensor codewords only if the symmetries of the local code play well with the complex used, in particular, one can no longer use random codes. This fact was already observed by~\cite{KalachevP25}, where they suggested that one might be able to use the non-Abelian complexes of Rungtanapirom--Stix--Vdovina~\cite{RSV19} (RSV), along with local RS codes to build quantum LTCs. It is also very similar to the approach that is discussed in~\cite{DinurELLM22}, which they did not carry out for their final result. Our main work goes into showing that an appropriate choice of RS codes that satisfies our product expansion conjecture ensures that every local view in the RSV complex is a tensor codeword. The DLV framework then applies automatically, and one can lift the product expansion of the local codes, to local testability of the global quantum code defined on the complex.

\paragraph{Future Directions:} Beyond codes, the HDX framework has led to many other applications in TCS such as sampling algorithms~\cite{AnariLG20}, property testing~\cite{BLM24, DDL24}, PCPs~\cite{BafnaMV25} and the construction of the first explicit lossless vertex expanders using the RSV complexes~\cite{HLMRZ}, which served as an inspiration for this work. We believe our work opens up exciting avenues for further applications. We use certain properties of algebraic HDX constructions (see Lemma~\ref{lem:construction}, property (4)) that to our knowledge have not yet seen applications. An immediate open question is be to obtain better qLTC and qLDPC codes that support multiplication~\cite{Dinur0Z25, GolowichLin}, and potentially lead to advances in classical and quantum PCP constructions.

\section{Technical Overview}\label{sec:overview}
We start by explaining the key ideas in the framework of~\cite{DinurELLM22} for constructing LTCs from cubical complexes\footnote{These ideas are quite similar to the quantum LDPC codes construction of~\cite{PanteleevK22}, but we choose to focus on classical LTCs.}. We then discuss the abstract framework of~\cite{DLV} that obtains quantum LTCs from 4-dimensional cubical complexes by generalizing their ideas, albeit runs into a `polylog-bottleneck'. We then discuss how to overcome this bottleneck using codes with symmetries that are `compatible' with the symmetries of the complex used. Finally we explain  our construction based on the RSV complexes and Reed-Solomon codes as local codes over them, as well as some intuition for their conjectured product expansion. %We show how a careful choice of Reed-Solomon codes as local codes over the complex leads to  all local views in the complex being tensor products of these codes. Additionally they satisfy our product expansion conjecture, which together with the local tensor view property, allows us to apply the local-to-global framework of DLV in a blackbox way to obtain good quantum LTCs.

%To illustrate the main ideas in our construction, it suffices to focus on constructing classical LTCs from 2-dimensional \emph{non-Abelian} cubical complexes with \emph{right-right} multiplication. The non-trivial part in the construction is to choose the inner codes so that their automorphisms are compatible with the complex and they are product expanding. The main work goes into showing that each local view at a vertex is a tensor codeword. While this is immediate in prior constructions, for us it is highly reliant on the properties of the complexes, as well as their compatibility with the codes we use (see Section~\ref{sec:}). Once we have this, the framework of~\cite{DLV} kicks in and give us the required local testability of the global quantum  code\footnote{Their framework is a local-to-global theorem prove that the nice properties of local views,  `coboundary or product expansion' give nice properties of the global complex -- `(co)systolic expansion'\MB{Check}. One can then define a natural quantum code over this complex using the set of local codes, that one sets up over this complex, a high-dimensional analogue of the idea in the expander codes of~\cite{SipserS94}.}. Hence our task boils down to showing why such nice local views arise for a particular choice of a complex and product-expanding codes on it.

\subsection{LTCs from Left-Right Cayley Complexes}
We start by giving an exposition of the classical LTCs of~\cite{DinurELLM22} that used the \emph{Left-Right} Cayley complexes.

\begin{definition}[2D-Left-Right Cayley Complex]\label{def:group}
Let $G$ be a finite group and let $A,B\subseteq G$ be nonempty sets closed under inverses. The vertex set is $V=G\times\{0,1\}^2$ and for $g\in G$ and $i,j\in\{0,1\}$, an $A$-edge joins $(g,i,j)$ to $(ga,i,1-j)$ for each $a\in A$, i.e. it acts by right multiplication and a $B$-edge joins it to $(bg,1-i,j)$ for each $b\in B$, i.e. acts by left multiplication. Define the set of squares by
\begin{equation}\label{eq:squareset}
\mathcal Q= \left\{
\begin{tikzcd}[column sep=huge,row sep=large]
(bg,10) \arrow[r,"a"] & (bga,11)\\
(g,00) \arrow[r,"a"'] \arrow[u,"b"] & (ga,01) \arrow[u,"b"'],
\end{tikzcd}
 \mid g\in G,\ a\in A,\ b\in B\right\}.
\end{equation}
\end{definition}
%\left\{\{(g,00),(ga,01),(bg,10),(bga,11)\}:

\paragraph{Defining the Code on the Complex:}
Fix orderings $A=(a_1,\ldots,a_m)$ and $B=(b_1,\ldots,b_n)$, a finite field $\E$, and let linear codes $C_A\subseteq \E^A$, $C_B\subseteq \E^B$ with the specified ordering over $A, B$. A word $z\in \E^{\mathcal Q}$ assigns one field element $z(Q)$ to every square. We then define the code $\cC$ by demanding that the squares around every $A$-edge $((g,00),(ga,01))$ or $((g,10), (ga,11))$, obtained by varying $b$ in the specified order, lie in $C_B$ and similarly the squares around any $B$-edge lie in $C_A$:
\begin{equation}\label{eq:abstractcode}
\mathcal C=\left\{z\in \E^{\mathcal Q}:\begin{array}{ll}
(z(Q_e(b_1)),\ldots,z(Q_e(b_n)))\in C_B&\text{for every }A\text{-edge }e,\\
(z(Q_e(a_1)),\ldots,z(Q_e(a_m)))\in C_A&\text{for every }B\text{-edge }e
\end{array}\right\},
\end{equation}
where $Q_e(\cdot)$ denotes the square obtained on the edge $e$ by moving via the specified element.

\paragraph{Local Views at Vertices:}
Take any codeword $z\in\mathcal C$ and fix a vertex $v$. Let $Q_v(a,b)\in\mathcal Q$ be the unique square containing the $A$-edge and $B$-edge incident to $v$ with labels $a,b$ at $v$. 
 The squares around $v$ can be arranged into an $m\times n$ matrix $Z_v$ with $Z_v(a,b):=z(Q_v(a,b)), a\in A, b\in B$, which we will refer to as the \emph{local view} of $v$. The row labeled by $a$ contains exactly the symbols on squares containing the $A$-edge $e$ incident to $v$ with label $a$ and the same holds for columns labeled by $B$. Thus the word $z$ restricted to this local view belongs to $C_A \otimes C_B$.

\paragraph{Deriving the rate and local testability of $\cC$ for an appropriate choice of $C_A, C_B$:} 
A simple constraint counting argument implies that the rate of $\cC$ equals $2\rate(C_A)+2\rate(C_B)-3$. Thus to get non-zero rate for $\cC$ via counting, the sum of their rates must be larger than $3/2$.

Secondly, consider the set of parity checks given by picking a vertex, and testing whether the squares around it belong to the code $C_A\otimes C_B$. We have shown above that all codewords satisfy this property. A tensor code is locally testable via the natural row and column checks, but if it is additionally `robustly' testable via these checks, or equivalently has product expansion, the works~\cite{DinurELLM22,PanteleevK22} show that one can lift this property to the full code, i.e. $\cC$ itself is locally testable via these vertex-based parity checks! We discuss product expansion later on, but for now it suffices to think of this as an abstract property we will want our local codes to satisfy in order to build LTCs from this framework.

\subsection{Generalizing to Higher Dimensions}
The work of~\cite{DLV} (DLV) generalized this framework to show that `4-dimensional' Cayley complexes yield quantum LTCs, in a similar  way by defining local codes on faces. Although there seems to be no way of generalizing Left-Right non-Abelian Cayley complexes to higher dimensions, they obtained higher dimensional complexes by moving to Abelian Cayley complexes, where we have 4 generating sets, and each of them acts by right multiplication. At a high level, they require the following properties from a 4-dimensional Cayley complex,
\begin{enumerate}
\item The underlying Cayley graphs in each direction should be a good spectral expander, i.e. the normalized adjacency matrix should have second largest eigenvalue at most some small constant.
\item A 4-tuple of codes $(C_i)_{i\in 4}$, which will serve as our constant-sized local codes. We require the rates of these codes to be close to $1$ and that they satisfy 4-dimensional product expansion\footnote{Technically, one pair needs to have rates close to 1, and the other needs to have rates close to 0, and we also need $(C_i^\perp)$ to satisfy product expansion.}.
\item The property that defining a code $\cC$ by adding constraints on the (local views of the co-dimensional 1 faces of the) complex according to the codes $C_i$ (or their duals) satisfies that the local view of every face in the complex is a tensor codeword; this is the higher-dimensional analogue of the local view at a vertex being the tensor codeword $C_A \otimes C_B$.
\end{enumerate}

%It turns out that the latter complexes are non-trivial to construct even in 2 dimensions, since one needs to choose the generating sets $A,B$ carefully so that the complex has squares at all, unlike the Left-Right case, where every choice yields squares. Secondly, as we explain later, the inner codes need to be `compatible' with the complex, otherwise the local views are not tensor codewords.

\paragraph{Expansion Bottleneck:} It is a well-known fact that Abelian Cayley graphs on $n$ vertices need $\Theta(\log n)$ generators to achieve constant spectral gap. This parameter is essentially responsible for the $\log n$-losses in the soundness/query-complexity of the quantum LTCs of DLV. 

\subsection{Constructing LTCs over Right-Right Non-Abelian Cayley Complexes}
Thus to construct good quantum LTCs we need to move to non-Abelian Cayley complexes under \emph{right} multiplication, which also has a natural analogue in higher dimensions (see Definition~\ref{def:coset}). The main difficulty is already present in using these complexes in 2 dimensions for constructing good classical LTCs, thus we only focus on this for the overview.

\begin{definition}[2D-Right-Right-Cayley Complex]\label{def:rr-cayley}
Again we have a set of vertices $V = G \times \{0,1\}^2$ and two generating sets $A,B$ that are closed under inverse but unlike the Left-Right Cayley complex, both the set of edges act by right multiplication: an $A$-edge joins $(g,i,j)$ to $(ga,i,1-j)$ for each $a\in A$ and a $B$-edge joins it to $(gb,1-i,j)$ for each $b\in B$. 

To get a well-defined set of squares, we additionally require that $AB=BA$ and $|AB|=|A||B|$. This implies that, for each $a\in A,b\in B$, there are unique $b'\in B,a'\in A$ with $ab'=ba'$, which can be represented by the permutations $\beta_a(b)=b'$ and $\beta_b(a)=a'$, so that we have the set of squares,
\begin{equation}
\mathcal Q= \left\{
\begin{tikzcd}[column sep=huge,row sep=large]
(gb,10) \arrow[r,"\beta_b(a)"] & (ga\beta_a(b),11)\\
(g,00) \arrow[r,"a"'] \arrow[u,"b"] & (ga,01) \arrow[u,"\beta_a(b)"'],
\end{tikzcd} \mid g\in G,\ a\in A,\ b\in B\right\}.
\end{equation}
\end{definition}
% \[
% \begin{tikzcd}[column sep=huge,row sep=large]
% (gb,10) \arrow[r,"\beta_b(a)"] & (ga\beta_a(b),11)\\
% (g,00) \arrow[r,"a"'] \arrow[u,"b"] & (ga,01) \arrow[u,"\beta_a(b)"'],
% \end{tikzcd}
% \]
% giving us the set of squares,
% \begin{equation}\label{eq:squareset}
% \mathcal Q=
% \left\{\{(g,00),(ga,01),(gb,10),(ga\beta_a(b),11)\}:
%  g\in G,\ a\in A,\ b\in B\right\}.
% \end{equation}
% \end{definition}

\paragraph{The global code:} Defining a code $\cC$ on this complex is more subtle than in the Left-Right case -- given a fixed edge, say $((g,00), (ga,01))$ enumerating the squares touching it using $(g,00)$ in the fixed order over $B$, is not the same as enumerating them from $(ga,01)$. Thus, for an $A$-edge $e$, we choose the endpoint whose second bit is $0$ and enumerate the squares in the prescribed order $b_1,\ldots,b_n$; denote the resulting squares by $Q_e(b_1),\ldots,Q_e(b_n)$. For a $B$-edge $e$, we enumerate them from the endpoint whose first bit is $0$, using the $A$-edges in the order $a_1,\ldots,a_m$.

\textbf{Local Views at a Vertex under Symmetry Assumptions on the Local Codes:}
Again we fix $z\in\mathcal C$ and a vertex $v$, and look at its local view given by the matrix, 
\[
Z_v(a,b):=z(Q_v(a,b)), \qquad a\in A, b\in B.
\]
One can check that if $v$ has type $00$, then just like the case of the Left-Right Cayley construction, each row of $Z_v$ belongs to $C_B$ and column belongs to $C_A$, i.e. $Z_v \in C_A \otimes C_B$. But consider a vertex $v$ of type $01$. Each column of $Z_v$ will belong to $C_A$, but one can check that each row $a$ will be the permuted codeword,
\[
u\circ\beta_{a}
=\bigl(u_{\beta_{a}(b_1)},\ldots,u_{\beta_{a}(b_n)}\bigr),
\]
for some $u\in C_B$. The analogous property holds for vertices of types $10$ and $11$.

If our codes $C_A, C_B$ are invariant under these permutations, i.e. $u\circ \beta_a \in C_B$ for all $u\in C_B$, and $u\circ \beta_b \in C_A$ for all $u\in C_A$, then we will get that each vertex view belongs to $C_A \otimes C_B$.

%It belongs to $C_B$ by the symmetry assumption~\eqref{eq:symmetry}.

%Similarly, column $k$ lists the symbols around the $B$-edge $e$ incident to $v$ with label $b_k$. Writing $u=u_e(z)\in C_A$, this column is $u$ if $v=w_e$, and $u\circ\beta_{b_k}$ otherwise. Hence every column belongs to $C_A$. The tensor code $C_A\otimes C_B$ is precisely the set of matrices with columns in $C_A$ and rows in $C_B$. Therefore
% \begin{equation}\label{eq:abstracttensor}
% z\in\mathcal C\quad\Longrightarrow\quad Z_v\in C_A\otimes C_B
% \quad\text{at every vertex }v.
% \end{equation}

% Write a word in $E^n$ as $u=(u_{b_1},\ldots,u_{b_n})$, and define
% \[
% u\circ\beta_a:=\bigl(u_{\beta_a(b_1)},\ldots,u_{\beta_a(b_n)}\bigr).
% \]
% For $w=(w_{a_1},\ldots,w_{a_m})$, define $w\circ\beta_b$ analogously. Assume
% \begin{equation}\label{eq:symmetry}
% u\in C_B\ \Longrightarrow\ u\circ\beta_a\in C_B\quad(a\in A),\qquad
% w\in C_A\ \Longrightarrow\ w\circ\beta_b\in C_A\quad(b\in B).
% \end{equation}
% Thus these permutations preserve the codes with their fixed coordinate orderings.

\subsection{Our Construction in 2-Dimensions: Complex and Local Codes}
Thus our task reduces to choose local codes that are invariant under the maps $\beta$ of the complex. Our construction builds on the ideas suggested in~\cite{DinurELLM22}, in their abandoned approach, and~\cite[footnote~36]{PK24}. They noticed that projective Reed-Solomon codes are natural choices for local codes, along with the complexes of \cite{RSV19} (or other similar algebraic complexes). In more detail, the vertices of the RSV complexes is $G=\operatorname{PGL}_2(\mathbb F_{2^r})$, the group of invertible matrices over an extension field $\E = \F_{2^r}$\footnote{Precisely, the complex from~\cite{RSV19} turns out to have the vertex set being cosets of some $G$ with respect to the subgroup $\operatorname{PGL}_2(\mathbb{F}_2)$, but we ignore this distinction for the overview.}. As in the constructions of Ramanujan
graphs~\cite{lubotzky1988ramanujan} (and
complexes~\cite{LSV1}), the $A$-edge (analogously for $B$-edges) around each vertex can be labeled by the set of \emph{projective lines} or elements of $\P^1(\F_{s_A}) = \F_{s_A}\cup \infty$, where $s_A$ is some power of $2$. This set can alternatively be viewed as a set of 2-dimensional vectors: $a \to (a, 1)^T$ for $a\in \F_{s_A}$ and $a \to (1, 0)^T$ if  $a = \infty$, which we refer to as $r_a$. For each fixed $b$, each $\beta_b$ permutation over the labels of $A$ can then be implemented using a $2\times 2$ invertible matrix $B_b$: $B_b r_a=\lambda_a r_{\beta_b(a)}$ for a nonzero scalar $\lambda_a$. This symmetry property plays well with projective RS (PRS) codes, which is the set of homogenous bivariate degree $d$ polynomials with coefficients in $\E$, evaluated over $\P^1(\F)$. For any such polynomial $p$, we have that, 
\[p \circ \beta_b(r_a) = p(r_{\beta_b(a)}) = p(\lambda_a^{-1}B_b r_a) = \lambda_a^{-d} q(r_a),\]
for some degree $d$ polynomial $q = p\circ B_b$. Note that if the scaling factor in front of $q$ did not exist, we would get that $p\circ \beta_b$ belongs to the code (and the same holds for the $\beta_a$ maps), and then by the argument above each vertex view would be a tensor PRS codeword. 

That is, the main difficulty is how to handle these scaling factors. We deal with this issue using the fact that one can set up certain matrices on faces of the complex carefully, such that they are `compatible' with the $B_a, B_b$ matrices (see property (4) in Lemma~\ref{lem:construction}). Then, using these matrices, we choose a scaling factor for each edge and define the constraint on it as a scaled PRS codes (thus each edge has a different constraint). Finally, we show that one can choose appropriate scalars per $(v,Q)$ pair, so that the scaled code satisfies the fact that each local view around a vertex $v$ belongs to a tensor PRS code (with coefficients in $\E$) evaluated over the domain $\P^1(\F_{s_A})\times \P^1(\F_{s_B})$. We defer a more detailed discussion for the intuition here to Section~\ref{sec:complex-and-code}.

\subsubsection{Product Expansion of Reed-Solomon Codes}\label{sec:pe-intro}
The second property we need from our pair of local codes (which in our case are PRS codes evaluated over $\P^1(\F_s)$), is that they satisfy product expansion (PE) in the high-rate regime. Let us define 2d-PE (see Definition~\ref{def:product-expansion} for the high-dimensional analogue): let $C_1\subseteq\E^{S_1}$ and $C_2\subseteq\E^{S_2}$ be linear codes, where $n_i=|S_i|$, and view a word on $S_1\times S_2$ as a matrix. For
a matrix $M$, let $\supp(M)$ be the set of its nonzero entries, and let $|M|_{\rm col}$ and $|M|_{\rm row}$ denote the number of its nonzero
columns and rows, respectively.

\begin{definition}[Two-dimensional product expansion]
\label{def:prod-expansion-intro}
The pair $(C_1,C_2)$ is $\rho$-product-expanding if every matrix $M$
that is the sum of a matrix whose columns belong to $C_1$ and a matrix
whose rows belong to $C_2$ admits such a decomposition $M=M_1+M_2$
satisfying
\[
|\supp(M)|
\geq \rho\bigl(n_1|M_1|_{\rm col}+n_2|M_2|_{\rm row}\bigr).\footnote{This two-dimensional notion is equivalent to robust testability of the tensor code $C_1\otimes C_2$. In higher dimensions (see Definition~\ref{def:product-expansion}) product expansion is a stronger condition.}
\]
\end{definition}

Polishchuk and Spielman~\cite{polishchuk1994nearly} proved that any pair of Reed-Solomon codes evaluated over $\F^\star_q$ is product expanding with $\rho = \Omega(1)$ when the \emph{sum} of their rates is \emph{less than one}, which is a tight condition, and thus these codes cannot be used as local codes for LTCs. 

To see why their sum-of-rates condition is tight, consider a pair of RS codes $C_1,C_2$ evaluated on $S_1,S_2$, respectively. Intuitively, the main obstruction to product expansion without such a rate restriction is a low-degree bivariate curve $F(X,Y)$ containing too many points of $S_1\times S_2$. For example, when $S_1=S_2=\mathbb F_q^\star$, the  diagonal $X-Y=0$ contains $q-1$ grid points, which is much more than its degree that is a constant. This example is the reason that PE fails when the sum of the rates becomes $1$.

%To see why their result is tight, consider a pair of RS codes $C_1, C_2$ evaluated over the domains $S_1, S_2$ respectively. Intuitively,  when these have arbitrarily large rates, PE boils down to showing that there is no low degree curve $F(X,Y)=0$ that intersects $S_1\times S_2$ on too many points, relative to the degree of $F$. Thus when $S_i = \F^\star_q$, the curve $X-Y=0$ intersects this grid in $q-1$ points, which is an unbounded number despite the constant degree of $F$, and an obstruction to PE.

The work of~\cite{BafnaV} avoids this issue by choosing $S_1, S_2$ carefully: they show that if the $S_i$ are \emph{multiplicative subgroups of coprime orders} of $\F^\star_q$, for prime $q$, not only does the above obstruction disappear (since $|S_1\cap S_2|=1$), but the pair has constant PE in the high-rate regime (with an analogous conjecture in higher dimensions). We need such a PE result for PRS codes; fortunately, one can show that a PRS code  evaluated over $\P^1(\F_{s})$ is equivalent to an RS code (of similar degree) evaluated over a multiplicative subgroup of $\F^\star_{s^2}$ of size $s+1$ (see Lemma~\ref{lem:prs-rs-pe}). We want our ambient field to contain $\F^\star_{s^2}$, which means we cannot take it to be of prime order. Thus we move to extension fields, and use the RSV complexes over $\E = \F_{2^r}$. One could hope that the same conjecture also holds for arbitrary coprime order multiplicative subgroups $S_i$ of $\E^\star$. That turns out to be false due to low-degree curves that intersect $S_1\times S_2$ in too many points. But restricting to a pair of coprime order subgroups whose  order is of the form $2^{m}+1$ seems to avoid such obstructions. 

In conclusion, we conjecture this modified PE statement (see Conjecture~\ref{conj:binary-pe}) in high-dimensions and choose the 4-dimensional complex so that in each direction, the edge labels lie in $\P^1(\F_{s_i})$ where $s_i$ is a power of $2$. The local views at a vertex are then shown to be tensor products of PRS codes evaluated over $\P^1(\F_{s_i})$, which satisfy PE under our conjecture. This gives us all the properties required to get good qLTCs using DLV's framework.

\section{Preliminaries}\label{sec:prelims}

\subsection{Cubical Complexes}
Our complex will be obtained from the construction of~\cite{RSV19}. Instead of the Cayley complexes defined above, we will get a complex whose set of vertices contains the cosets of a group $G$ with respect to a subgroup $H$, just like in the case of Schreier graphs that generalize Cayley graphs (the setting of $H = \{1\}$).

\begin{definition}[A $k$-dimensional complex on cosets]\label{def:coset}
Let $H$ be a subgroup of a finite group $G$. A coset is a set $gH=\{gh:h\in H\}$, and $G/H$ denotes the set of these cosets. Let $D_1,\ldots,D_k\subseteq G$ be nonempty subsets satisfy $D_i^{-1}=D_i$ and $HD_iH = D_i$. We write $D/H=\{dH:d\in D\}$, let $n_i=|D_i/H|$ and assume
\begin{equation}\label{eq:coset}
 D_iD_j=D_jD_i \quad(i\ne j),
 \qquad
 |D_S/H|=\prod_{i\in S}n_i
 \quad(\varnothing\ne S\subseteq[k]),
\end{equation}
where $D_S=\prod_{i\in S}D_i$.
The vertices of the complex are given by $(gH,\mathbf b)\in(G/H)\times\{0,1\}^k$.
The direction-$i$ edges are
\[
 \{(gH,\mathbf b),(gdH,\mathbf b+\mathbf e_i)\}
 \qquad(d\in D_i),
\]
where $\mathbf e_i$ is the $i$th unit vector and addition is modulo two.
For each nonempty $S\subseteq[k]$, the cubes in directions $S$
are the $2^{|S|}$-sized vertex sets of the form,
\[
 \left\{
 \left(g_T H,\mathbf b+\sum_{i\in T}\mathbf e_i\right)
 :T\subseteq S
 \right\} 
\]
where $\mathbf b\in\{0,1\}^k$, $g_T\in G$, and
$g_T^{-1}g_{T\cup\{i\}}\in D_i$ for $ T \subseteq S$ and $i \in S \setminus T$.
Each such cube has dimension $|S|$, and its faces arise
by fixing some of the type bits in $S$.
\end{definition}

Note that the neighbor sets are independent of the representative $g$ of $gH$, because $hD_i=D_i$ for every $h\in H$ and $i \in [k]$. Each vertex has $n_i$ incident edges in direction $i$.  The size condition in~\eqref{eq:coset} says that each two-edge path in two distinct directions $i$ then $j$ is uniquely determined by its endpoints, and we also have a unique path in the reversed order. That is, by considering $gd_iH\to gH\to gd_jH$ and its reverse, we can identify fourth corner of the unique square containing two given edges at $gH$.

\begin{definition}
   Let $X$ be a cubical complex with vertex
type classes $V_{\type}$, indexed by $\type\in\{0,1\}^4$.
Every vertex has $n_i\ge1$ neighbors in direction $i$,
and direction-$i$ edges flip only the $i$th type bit.
For $\type_i=0$, we write $X_{i,\type}$ for the
$n_i$-regular bipartite graph consisting of these edges
between $V_{\type}$ and $V_{\type+\mathbf e_i}$,
where $\mathbf e_i$ is the $i$th unit vector and addition
is modulo two.

For $0\le\lambda<1$, we say that $X$ is
\emph{$\lambda$-expanding} if
$\lambda_2(X_{i,\type})\le\lambda$ for every direction $i$
and every $\type$ with $\type_i=0$.
\end{definition}

\subsection{Projective Reed-Solomon Codes}\label{sec:prs}

We define projective Reed--Solomon codes, identify their duals, and prove their symmetry under M\"obius transformations.

\paragraph{Projective coordinates.}
Let $s$ be a prime power and let $\E$ be a finite field containing $\F_s$.
The projective line $\PP^1(\F_s)$ is the set of one-dimensional subspaces of $\F_s^2$. Write $\langle r\rangle$ for the line spanned by a nonzero vector $r$.
We identify its $s+1$ elements with $\F_s\cup\{\infty\}$ using the representatives
\[
r_x=(x,1)^T\quad(x\in\F_s),\qquad r_\infty=(1,0)^T.
\]
Write $\GL_2(\F_s)$ for the invertible $2\times2$ matrices, $\SL_2(\F_s)$ for its determinant-one subgroup, and $\PGL_2(\F_s)$ by identifying two matrices which differ by a scalar multiple.

\begin{definition}[Projective Reed--Solomon code]\label{def:prs}
For $0\le d\le s-1$, define
\[
\PRS_d(\F_s;\E)=\left\{(f(r_x))_{x\in\PP^1(\F_s)}:
f(X,Y)=\sum_{j=0}^d f_jX^jY^{d-j},\quad f_j\in \E\right\}.
\]
We abbreviate this code by $\PRS_d$ when the fields are clear.
\end{definition}

Thus finite coordinates are $f(x,1)$, and the coordinate at infinity is $f(1,0)=f_d$.
It is well-known that the code has parameters $[s+1,d+1,s+1-d]_{\E}$.

\begin{claim}[Dual of projective Reed-Solomon codes]\label{lem:prs-dual}
For $0\le d\le s-1$, we have
$
\PRS_d(\F_s;\E)^\perp=\PRS_{s-1-d}(\F_s;\E).
$
\end{claim}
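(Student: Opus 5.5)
The plan is a dimension count combined with an orthogonality check. Since $\PRS_d$ has dimension $d+1$ and $\PRS_{s-1-d}$ has dimension $s-d$, the two dimensions add to $s+1$, which is the length. It therefore suffices to show that every codeword of $\PRS_d$ is orthogonal to every codeword of $\PRS_{s-1-d}$ under the standard bilinear form $\langle u,w\rangle=\sum_{x\in\PP^1(\F_s)}u_xw_x$. Equality of $\PRS_{s-1-d}$ with the full dual then follows because the dual has dimension $(s+1)-(d+1)=s-d$. The dimension claim for both codes was noted above as well known. For completeness, it follows from injectivity of evaluation: a nonzero homogeneous form of degree $e\le s-1$ has at most $e<s+1$ projective zeros.

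For orthogonality, the pointwise product of $f\in\PRS_d$ and $g\in\PRS_{s-1-d}$ is the evaluation of a homogeneous form $h=fg$ of degree $s-1$. By bilinearity it suffices to treat monomials $h=X^jY^{s-1-j}$ for $0\le j\le s-1$. I will compute the sum
\[
\sum_{x\in\F_s}x^j \;+\; h(1,0).
\]
Here $h(1,0)$ equals $1$ if $j=s-1$ and $0$ otherwise. The finite part uses the standard power-sum identity over $\F_s$: $\sum_{x\in\F_s}x^j$ equals $0$ for $0\le j<s-1$ (with the convention $0^0=1$, the case $j=0$ gives $s\cdot 1=0$), and equals $-1$ for $j=s-1$. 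So the total sum is $0$ when $j<s-1$, and $-1+1=0$ when $j=s-1$. Hence $\sum_x f(r_x)g(r_x)=0$.

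The only delicate step is checking the power-sum identity and the boundary cases. The identity itself comes from summing a geometric series over the cyclic group $\F_s^\star$: a generator $\omega$ satisfies $\omega^j\ne1$ for $0<j<s-1$. The sum takes values in $\F_s\subseteq\E$, so the choice of extension field $\E$ is irrelevant. The boundary cases are $d=0$ and $d=s-1$, where one of the codes is spanned by the constant form or all of $\E^{s+1}$; both are covered by the same computation. The point at infinity is what makes the $j=s-1$ case cancel. This is the step I expect to be the main subtlety, and it is exactly where the projective code differs from the affine RS dual, which would otherwise require column multipliers.
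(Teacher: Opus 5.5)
Your proposal is correct and follows the paper's own proof. Both show orthogonality via the power-sum identity $\sum_{x\in\F_s}f(x,1)g(x,1)=-f(1,0)g(1,0)$ for forms of total degree $s-1$ (which you check monomial by monomial), then conclude from the dimension count $(d+1)+(s-d)=s+1$. One harmless slip in your boundary remark: $\PRS_{s-1}$ has dimension $s$, so it is the zero-sum hyperplane (the dual of the constant code), not all of $\E^{s+1}$; your argument never uses that remark.
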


\begin{proof}
Let $f,g$ be homogeneous polynomials of degrees $d,s-1-d$.
Their product has degree $s-1$, so the finite-field power sums give
\[
\sum_{x\in\F_s}f(x,1)g(x,1)=-f(1,0)g(1,0).
\]
Thus their evaluation vectors are orthogonal, including the
coordinate at infinity. The two codes have dimensions $d+1$ and $s-d$, whose sum is $s+1$, proving equality.
\end{proof}

\paragraph{M\"obius transformations.}
For $B=\begin{pmatrix}a&b\\c&e\end{pmatrix}\in\GL_2(\F_s)$, let $\beta_B$ be the permutation of projective coordinates induced by $B$,
\begin{equation}\label{eq:prs-mobius-scalars}
\beta_B(x)=\langle Br_x\rangle,\qquad
Br_x=\lambda_B(x)r_{\beta_B(x)}.
\end{equation}
In scalar notation, $\beta_B(x)=(ax+b)/(cx+e)$ for finite $x$, with a zero denominator giving $\infty$; also $\beta_B(\infty)=a/c$ if $c\ne0$, and $\infty$ otherwise.
For finite $x$ with $cx+e\ne0$, the scalar is $\lambda_B(x)=cx+e$.

\begin{claim}[Mobius symmetry]\label{lem:prs-mobius}
For every $B\in\GL_2(\F_s)$,
\[
\left\{\bigl(\lambda_B(x)^d(u\circ\beta_B)_x\bigr)_x:
u\in\PRS_d\right\}=\PRS_d,
\]
where $(u\circ\beta_B)_x=u_{\beta_B(x)}$.
\end{claim}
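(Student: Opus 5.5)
The plan is to prove the inclusion ``$\subseteq$'' directly from the homogeneity of the defining polynomials, and then get equality for free because the map involved is a bijection.

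Fix $u\in\PRS_d$, say $u_y=f(r_y)$ for a homogeneous $f$ of degree $d$ with coefficients in $\E$. For each $x\in\PP^1(\F_s)$, the defining relation \eqref{eq:prs-mobius-scalars} gives $r_{\beta_B(x)}=\lambda_B(x)^{-1}Br_x$. Note that $\lambda_B(x)\neq 0$, since $B$ is invertible and $Br_x\neq0$. Homogeneity of degree $d$ then gives
\[
\lambda_B(x)^d(u\circ\beta_B)_x=\lambda_B(x)^d f\bigl(\lambda_B(x)^{-1}Br_x\bigr)=f(Br_x)=g(r_x),
\]
where $g(X,Y):=f\bigl(B(X,Y)^T\bigr)$. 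Since $B$ acts linearly on $(X,Y)$, substituting the linear forms $aX+bY$ and $cX+eY$ into a homogeneous degree-$d$ form gives another homogeneous degree-$d$ form. Its coefficients lie in $\E$ because $\F_s\subseteq\E$. Hence the transformed vector is the evaluation of $g$, so it lies in $\PRS_d$. The only point to check is that this holds uniformly at every coordinate, including $x=\infty$ and the $x$ with $\beta_B(x)=\infty$. It does, because the relation $Br_x=\lambda_B(x)r_{\beta_B(x)}$ is the definition of $\lambda_B$ at every point, and no case analysis is needed.

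For ``$\supseteq$'', I observe that the map $\Phi_B:u\mapsto(\lambda_B(x)^d u_{\beta_B(x)})_x$ is an injective linear map on $\E^{\PP^1(\F_s)}$. It is a permutation of coordinates composed with nonzero diagonal scaling. By the first part, $\Phi_B$ maps $\PRS_d$ into itself, and injectivity together with finite dimension makes the image all of $\PRS_d$. Alternatively, $f\mapsto f\circ B$ is invertible on homogeneous degree-$d$ forms with inverse $f\mapsto f\circ B^{-1}$, so every $g$ arises. I do not expect a real obstacle here; the only delicate point is keeping the scalar convention straight, namely that $\lambda_B(x)^{d}$ exactly cancels the $\lambda_B(x)^{-d}$ coming from homogeneity.
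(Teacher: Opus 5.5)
Your proposal is correct and follows essentially the same route as the paper. Both arguments use homogeneity to get $\lambda_B(x)^d u_{\beta_B(x)} = f(Br_x) = (f\circ B)(r_x)$, and then note that $f\mapsto f\circ B$ is a bijection on homogeneous degree-$d$ forms; your alternative injectivity-plus-dimension argument for the reverse inclusion is equally valid.
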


\begin{proof}
For $u=(f(r_x))_x$, we have
$\lambda_B(x)^d(u\circ\beta_B)_x
=f(\lambda_B(x)r_{\beta_B(x)})=f(B r_x)=:(f\circ B)(r_x)$.
Since $B$ is invertible, $f\mapsto f\circ B$ is a bijection on
homogeneous degree-$d$ polynomials, proving the claim.
\end{proof}

\section{The Cubical Complex and Local Codes}
As discussed in the overview, the main contribution of our work is to set up local Reed-Solomon codes on the non-Abelian cubical complex of RSV, in a way such that all the local views are tensor codewords. We start by defining the properties of the complex that we will use, then describe how the local codes are assigned and finally show that all local views are tensor codewords.

\subsection{The Complex and Its Symmetries} \label{sec:complex-and-code}

The following lemma describes the cubical complex and its symmetries. The proof is deferred to Section~\ref{sec:complex}.

\begin{lemma}[Cubical Complex Construction]\label{lem:construction}
Fix positive integers $m_1,\ldots,m_4$ divisible by $8$, and set
$s_i=2^{m_i}$ for $i\in[4]$. For each such choice and arbitrarily
large integers $r$, there is a finite cubical complex
$X=\Gamma\backslash\cT$, where $\Gamma\leq\cG$ acts freely on vertices,
with the following properties. Set
$G=\PGL_2(\FF_{2^r})$ and $H=\PGL_2(\FF_2)\leq G$.
\begin{enumerate}
\item The vertices are $(G/H)\times\{0,1\}^4$.
There are subsets $D_i\subset G$ satisfying
$HD_i=D_iH=D_i$, $D_i^{-1}=D_i$, and $|D_i/H|=s_i+1$.
Writing $\mathbf e_i$ for the $i$th unit vector, the direction-$i$
edges are
\begin{equation}\label{eq:coset-edges}
 (gH,\vec{b})\ \sim_i\ (gdH,\vec{b}+\mathbf e_i),
 \qquad dH\in D_i/H.
\end{equation}
Here addition is modulo two.
At each vertex, edges in distinct directions determine a unique cube.
Squares and higher cubes are the images of product cubes in $\cT$.
Moreover, $D_iD_j=D_jD_i$ for $i\ne j$.

\item At every vertex, the $s_i+1$ incident direction-$i$ edges
are labeled by $A_i=\PP^1(\F_{s_i})$.
The type of a vertex $v=(gH,\vec{b})$ is $\type(v)=\vec{b}$.

\item Fix nonzero representatives $r_a^{(j)}\in \F_{s_j}^2$ for
$a\in A_j$. For each directed direction-$i$ edge $e:v\to w$
and each $j\ne i$, let $\beta_e^{(j)}:A_j\to A_j$ pair the
direction-$j$ edges opposite each other across $e$.
There is a matrix $B_e^{(j)}\in\SL_2(\F_{s_j})$ such that
\begin{equation}\label{eq:label-transport}
 \beta_e^{(j)}(a)=\langle B_e^{(j)}r_a^{(j)}\rangle.
\end{equation}
% Reversing $e$ uses $(B_e^{(j)})^{-1}$ and the corresponding
% reciprocal scalars.
\item For every vertex $v$ of every $4$-cube $Q$ of $X$, one can choose matrices
$F_i(v,Q)\in\SL_2(\F_{s_i})$, for $1\leq i\leq4$, such that:
\begin{enumerate}
\item If $a_i$ labels the direction-$i$ edge of $Q$ at $v$, then
column $\type_i(v)+1$ of $F_i(v,Q)$ spans $a_i$.
\item For every direction-$i$ edge $e:v\to w$, oriented so that
$\type_i(v)=0$ and $\type_i(w)=1$, there are
$P_{v,e},P_{w,e}\in\SL_2(\F_{s_i})$, independent of $Q$, such that
every $Q\supset e$ satisfies
\begin{equation}\label{eq:compatibility}
\begin{aligned}
 F_i(v,Q)&=P_{v,e}\begin{pmatrix}\eta&*\\0&\eta^{-1}\end{pmatrix},
 &F_i(w,Q)&=P_{w,e}\begin{pmatrix}\eta&0\\ *&\eta^{-1}\end{pmatrix},\\[3pt]
 F_j(w,Q)&=B_e^{(j)}F_j(v,Q) &&(j\ne i),
\end{aligned}
\end{equation}
where $\eta \in \F_{s_i}^\times$ and the starred entries may depend
on $e,Q$, and $B_e^{(j)}$ is the matrix from item 3.
\end{enumerate}
\end{enumerate}
\end{lemma}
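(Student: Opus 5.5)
The plan is to realize $X$ as a quotient of a product of four Bruhat--Tits trees, following the arithmetic construction of~\cite{RSV19}. Let $K=\F_2(t)$ with a quaternion algebra $\mathcal D$ split at four places $v_1,\dots,v_4$ of degrees $m_i$. The residue fields there are $\F_{s_i}$ with $s_i=2^{m_i}$, so each Bruhat--Tits tree $\cT_i$ of $\PGL_2(K_{v_i})$ is $(s_i+1)$-regular. Put $\cT=\prod_i\cT_i$ and let $\cG=\mathcal D^\times/K^\times$ over the $S$-integers. The RSV arithmetic lattice $\Lambda\le\cG$ acts simply transitively on vertices of $\cT$ of a fixed type. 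For a suitable ideal $I$ of level $r$ (chosen with $\F_2[t]/I\cong\F_{2^r}$, possible for arbitrarily large $r$), take $\Gamma=\Lambda(I)$, the congruence subgroup. Strong approximation identifies $\Lambda/\Gamma$ with $G=\PGL_2(\F_{2^r})$. The finite stabilizer of the base vertex maps onto $H=\PGL_2(\F_2)$, which is where the coset description comes from. The divisibility of $m_i$ by $8$ is used only so that the relevant constant fields (for instance $\F_4$ and $\F_{16}$ needed to split the algebra and to embed $H$ compatibly) lie in every $\F_{s_i}$.

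Items 1 and 2 come from transporting the local structure through $\Lambda\to G$. The neighbours of the base vertex in $\cT_i$ correspond to $\PP^1(\F_{s_i})$, which gives the labeling by $A_i$. Let $D_i\subset G$ be the image of the elements of $\Lambda$ carrying the base vertex to a direction-$i$ neighbour, taken up to the stabilizer. Then the identities $HD_i=D_iH=D_i$ and $D_i^{-1}=D_i$ follow from invariance under the stabilizer and under inversion. The relation $D_iD_j=D_jD_i$ together with uniqueness of squares is the product structure of $\cT$: each pair of edges in distinct directions spans a unique product square. Freeness of the $\Gamma$-action and injectivity on balls, valid for large $r$ by a girth argument, let us push squares and cubes down to $X$. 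Since $|D_i/H|=s_i+1$, the quotient has the stated degrees.

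Items 3 and 4 are the heart of the proof, and I would treat them uniformly. In $\cT_j$ a vertex is a lattice class $[L]$, and a choice of basis of $L$ identifies its neighbours with $\PP^1(\F_{s_j})$ via $L/\pi L$. Attach to every vertex of $\cT$ in each direction $j$ a basis frame, i.e.\ an element of $\GL_2(\mathcal O_{v_j})$ reducing to $\SL_2(\F_{s_j})$ after normalizing determinants. Moving along a direction-$i$ edge with $i\ne j$ leaves the $\cT_j$-coordinate unchanged, but the labeling at the new vertex is obtained by pulling back through the lattice element used to reach it. The two frames therefore differ by the reduction of that element at $v_j$, and this reduction is the matrix $B_e^{(j)}$ of~\eqref{eq:label-transport}. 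It lies in $\SL_2$ after rescaling because the image of $\det$ is a square, since every element of $\F_{2^k}^\times$ is a square.

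For item 4, set $F_j(v,Q)=B_e^{(j)}$-transported frames for $j\ne i$ directly, so the last relation in~\eqref{eq:compatibility} holds by construction. For $j=i$, use the geometry of the single edge $[L]\supset[L']\supset\pi[L]$ in $\cT_i$. Choose a basis of $L$ adapted to the line $L'/\pi L$, so that its first column spans the label $a_i$. Completing the basis gives an upper-triangular ambiguity (Borel), which yields $P_{v,e}\begin{pmatrix}\eta&*\\0&\eta^{-1}\end{pmatrix}$. Seen from $[L']$, the same adapted basis rescaled by $\pi$ gives the transposed (lower-triangular) Borel with the same torus part $\eta$, which explains the second column for type $1$.

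The main obstacle is consistency. The $F$'s must be defined on every $(v,Q)$ so that all edges of every $4$-cube satisfy~\eqref{eq:compatibility} simultaneously, including around $4$-cycles, and they must be $\Gamma$-equivariant so that they descend to $X$. My approach is to define everything on $\cT$ from one global frame assignment via the lattice $\Lambda$: the frames at $g\cdot x_0$ are images of the base frame under $g$. This makes the choices equivariant. The transport around a square is then automatically trivial in $\PGL_2$, with any leftover scalar in $\{\pm1\}=1$ in characteristic $2$. The diagonal entries $\eta$ must match on both endpoints, and I would check this by computing with the explicit generators of~\cite{RSV19} on one square.
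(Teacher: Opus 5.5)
\textbf{Genuine gap: item 4.} Your overall architecture matches the paper's: an RSV lattice acting on $\cT=\prod_i\cT_i$, a congruence quotient, vertex frames in each tree factor, $B_e^{(j)}$ as the determinant-normalized reduction of a change of lattice basis, and adapted bases of a tree edge whose ambiguity is Borel. But item 4 is where the content of the lemma lies, and there you name the obstacle without resolving it.

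You never say what $F_i(v,Q)$ is for all sixteen vertices and all four directions at once. ``$B_e^{(j)}$-transported frames'' depends on an edge $e$ and on a starting frame. Transporting an arbitrary frame would break condition (a) and the Borel forms in direction $j$. Your fallback, checking that the $\eta$'s match by computing with explicit RSV generators on one square, would not cover every $4$-cube and every edge. It is also unnecessary.

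\textbf{The missing device.} Separate the choices by level:
\begin{itemize}
\item fix the vertex frames $U_{\widetilde v,i}$ once, $\Gamma$-equivariantly;
\item for each cube $Q$, fix one adapted basis $V_{\widetilde Q,i}$ of its projected tree edge in $\cT_i$;
\item for each edge $e$, fix one adapted basis $V_{\widetilde e,i}$, independent of $Q$.
\end{itemize}
With $T_i=\diag(1,\pi_i)$, set
\[
F_i(v,Q)=\mathsf C_i\bigl(U_{\widetilde v,i},V_{\widetilde Q,i}T_i^{\type_i(v)}\bigr),\qquad
P_{v,e}=\mathsf C_i\bigl(U_{\widetilde v,i},V_{\widetilde e,i}\bigr),\qquad
P_{w,e}=\mathsf C_i\bigl(U_{\widetilde w,i},V_{\widetilde e,i}T_i\bigr).
\]
Then everything follows:
\begin{itemize}
\item Condition (a) holds because an adapted basis puts the edge in column $1$, and in column $2$ after multiplying by $T_i$.
\item The transverse relation is just the composition law $\mathsf C(U,V)\mathsf C(V,W)=\mathsf C(U,W)$. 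No path-independence or leftover-scalar issue arises.
\item $P_{v,e}^{-1}F_i(v,Q)$ and $P_{w,e}^{-1}F_i(w,Q)$ are the normalized reductions of $M=V_{\widetilde e,i}^{-1}V_{\widetilde Q,i}$ and of $T_i^{-1}MT_i$. Here $M\in\GL_2(\cO_i)$ has lower-left entry in $\pi_i\cO_i$, so these reductions are upper and lower triangular with the same diagonal. That gives the common $\eta$ automatically.
\end{itemize}
Your remark about ``the same adapted basis rescaled by $\pi$'' is exactly this computation. It only yields the lemma once $F$ is defined against a single cube-level basis and $P$ against a single edge-level basis.

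\textbf{Smaller issues in item 1.} RSV's lattice does not act simply transitively: the vertex stabilizer is $S_3\cong\PGL_2(\F_2)$. That is why vertices are cosets $G/H$. It also means your $\Lambda$-equivariant frames ``$g\cdot$(base frame)'' are not well defined, since the stabilizer acts nontrivially on $\PP^1(\F_{s_i})$. Only $\Gamma$-equivariance is needed, and it is available because $\Gamma$ acts freely.

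You also call $\Lambda$ type-preserving, but then define $D_i$ via elements of $\Lambda$ that move the base vertex to a neighbour, and those change type. To get the vertex set $(G/H)\times\{0,1\}^4$ with edges $(gH,\vec b)\sim_i(gdH,\vec b+\mathbf e_i)$, the paper uses three things:
\begin{itemize}
\item the full transitive $\Lambda$;
\item the type homomorphism $\tau$, with $\Gamma=\ker\theta\cap\ker\tau$ rather than a bare congruence subgroup;
\item surjectivity of $(\theta,\tau)$.
\end{itemize}
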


The symmetry properties of the complex are crucial to ensure that the local views of vertices are consistent, thereby yielding a well-defined global code. In the rest of this section, we will show how to set up local codes on the complex, and then prove that the local views of vertices are tensor codewords.

\subsection{Setting up Local Codes on the 2d Complex}

The main idea of our construction can be described in two dimensions, with the higher-dimensional case following essentially the same calculations. Thus, we will focus on the 2d version of the complex in Lemma~\ref{lem:construction}.

We associated each square with a scalar in a finite field $\E$. We equip this complex with a set of local codes, one for every edge in $X(2)$. The global code is defined as the set of all assignments of field elements to squares that satisfy the local constraints on edges. The edges' local codes are rescaled and permuted versions of two base projective Reed--Solomon codes, one for each direction.

\begin{definition}[The Base Codes]\label{def:base-codes}
Let $s_i=2^{m_i}$, where $m_i\ge1$ for $i\in\{1,2\}$, and let
$\E$ be a finite field containing both $\F_{s_1}$ and $\F_{s_2}$.
Fix an integer $h\ge0$ and, for each $i\in\{1,2\}$, an integer
$d_i$ satisfying
\[
0\le d_i\le s_i-1,
\qquad s_i-1\mid(2^h+1)d_i.
\]
Let $A_i=\PP^1(\F_{s_i})$, with a fixed coordinate order, and define
\[
C_i=\PRS_{d_i}(\F_{s_i};\E)\subseteq \E^{A_i},
\qquad i\in\{1,2\}.
\]
\end{definition}

We note that the divisibility condition on the degrees will be crucial.

Our goal is to design the edge local codes so that the local view around each vertex belongs to a tensor code. The main challenge for doing so is that, in our non-Abelian cubical complex, the two endpoints of an edge index its incident squares using their respective local labels that are not necessarily the same. Instead, the labels are related by the permutations $\beta_e$ as specified by Lemma~\ref{lem:construction}(3). If the base codes were invariant under the permutations
$\beta_e$, these differences in ordering would be harmless.
Projective Reed--Solomon codes instead have the \emph{scaled}
permutation symmetry of Claim~\ref{lem:prs-mobius}.
We therefore introduce rescaling weights, chosen so that
one rescaling of each square symbol works simultaneously
for its row and its column.  Making these weights compatible at both endpoints of an edge requires
an additional modification, which we resolve by introducing an edge-independent permutation.
Fortunately, rescaling and permutation do not affect the product expansion of tensor codes, which is a property concerning Hamming weights only.

We now describe the permutations and rescaling coefficients that will be used to define the local codes on edges.

\begin{fact}
On a finite field of characteristic two, the map $x\mapsto x^2$ is a field automorphism known as the Frobenius map.
\end{fact}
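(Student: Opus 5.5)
We need to show that on a finite field $K$ of characteristic two the map $\phi(x)=x^2$ is a field automorphism. The plan has three steps: multiplicativity, additivity, and bijectivity.

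\emph{Multiplicativity.} By commutativity, $\phi(xy)=(xy)^2=x^2y^2=\phi(x)\phi(y)$. In addition, $\phi(1)=1$ and $\phi(0)=0$.

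\emph{Additivity.} This is the only step that uses the characteristic. Expand
$$(x+y)^2=x^2+2xy+y^2 .$$
Since $2=1+1=0$ in $K$, the cross term $2xy$ vanishes, so $\phi(x+y)=\phi(x)+\phi(y)$. Hence $\phi$ is a ring homomorphism $K\to K$.

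\emph{Bijectivity.} Any ring homomorphism out of a field with $\phi(1)=1$ is injective. Its kernel is an ideal not containing $1$, hence zero. More concretely, $x^2=y^2$ gives $(x-y)^2=0$, which forces $x=y$ in a field. Since $K$ is finite, an injective self-map is also surjective. So $\phi$ is a bijective homomorphism, i.e.\ an automorphism.

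None of these steps is a real obstacle. The one point to watch is that surjectivity uses finiteness of $K$: over an infinite field of characteristic two, such as $\F_2(t)$, the map $x\mapsto x^2$ is injective but not surjective.

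For the later uses in the paper, iterating gives that $x\mapsto x^{2^h}$ is also an automorphism of $K$, and that it fixes each subfield $\F_{2^m}$ setwise. On $\F_{2^m}$ it is the $h$-fold Frobenius, and on the multiplicative group it is the power map $x\mapsto x^{2^h}$. This is presumably how the paper will turn the divisibility condition $s_i-1\mid(2^h+1)d_i$ in Definition~\ref{def:base-codes} into a compatibility statement for the rescaling weights.
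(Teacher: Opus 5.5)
Your argument is correct and is the standard proof: multiplicativity from commutativity, additivity because the cross term $2xy$ vanishes, injectivity because a field has no nonzero nilpotents, and surjectivity from finiteness. The paper itself gives no proof and simply invokes this as a well-known fact; your closing guess is also roughly right, since the paper later uses the additivity $(x+y)^{2^h}=x^{2^h}+y^{2^h}$ to raise a matrix--vector identity entrywise to the $2^h$-th power, while the divisibility condition $s_i-1\mid(2^h+1)d_i$ enters separately through $\eta^{s_i-1}=1$ for $\eta\in\mathbb{F}_{s_i}^\times$.
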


\paragraph{Edge-independent permutations.}
Let $P$ be the $h$-fold iterate of the Frobenius automorphism: $x\mapsto x^{2^h}$. This induces a permutation of the projective coordinates $A_i$ by permuting the field elements $\mathbb{F}_{s_i}$ and fixes $\infty$.
We extend this naturally to define the permuted codes $PC_i$. So, e.g., $P C_1$ consists of the words $(f(r_{a^{2^h}}))_{a \in A_1}$, with $f$ a polynomial of homogeneous degree $d_1$ (see Definition~\ref{def:prs}). 
The $\mathbb{F}_2$-linearity of the Frobenius map, $(x+y)^{2^h} = x^{2^h} + y^{2^h}$, will be crucial. Later, we will sometimes raise a vector or matrix to the $2^h$-th power entrywise, we will be explicit when doing so.

\paragraph{Vertex weights.} The rescaling coefficients are defined in terms of the matrices from Lemma~\ref{lem:construction}(4) and depend on the type of the vertex and the square.

First, each vertex associates a weight to each incident square. We use labels $a\in A_1$ and $b\in A_2$.
 For a vertex $v$ of type $t_1t_2$ and a square $Q=Q_v(a,b)$, define $c_1(v,Q)\in\F_{s_1}^{\times}$ and $c_2(v,Q)\in\F_{s_2}^{\times}$ by
\begin{equation}\label{eq:cdefinition}
\begin{aligned}
\text{column }t_1+1\text{ of }F_1(v,Q)&=c_1(v,Q)r_a,\\
\text{column }t_2+1\text{ of }F_2(v,Q)&=c_2(v,Q)r_b.
\end{aligned}
\end{equation}
These scalars can depend on the whole square $Q$, hence on both $a$ and $b$. Define
\begin{equation}\label{eq:nu}
\nu_v(Q)=
\begin{cases}
c_1(v,Q)^{d_1}c_2(v,Q)^{d_2},&v\text{ of type }00,\\[2pt]
c_1(v,Q)^{d_1}c_2(v,Q)^{2^h d_2},&v\text{ of type }01,\\[2pt]
c_1(v,Q)^{2^h d_1}c_2(v,Q)^{d_2},&v\text{ of type }10,\\[2pt]
c_1(v,Q)^{2^h d_1}c_2(v,Q)^{2^h d_2},&v\text{ of type }11.
\end{cases}
\end{equation}

\paragraph{Local codes.} We can now define the local codes on edges. Assign a symbol $z(Q)\in \E$ to every square $Q$. For a direction-$i$ edge $e$, let its \emph{designated endpoint} be the endpoint $w$ with $\mathrm{type}(w)_i=0$. The squares incident to $e$ are denoted by $Q_e(b)$ where $b$ is the label of the direction-$j$ edge ($j \neq i$) specified by $w$ that completes the square. The squares are ordered by the labels given by the designated endpoint. The local code enforced by edge $e$ is $C_j$, but rescaled and permuted according to the rules specified below. Let $D_e$ be the operation that multiplies the coordinate for each square $Q_e(b)$ by $\nu_w(Q_e(b))$. Write $D_eC=\{D_ec:c\in C\}$.

The following table specifies the local codes imposed on edges.
\begin{table}[htbp]
\centering
\begin{tabular}{@{}llll@{}}
\toprule
Direction & Endpoint types & Designated endpoint & Required edge code\\
\midrule
$1$ & $00\leftrightarrow10$ & $00$ & $D_eC_2$\\
$1$ & $01\leftrightarrow11$ & $01$ & $D_ePC_2$\\
$2$ & $00\leftrightarrow01$ & $00$ & $D_eC_1$\\
$2$ & $10\leftrightarrow11$ & $10$ & $D_ePC_1$\\
\bottomrule
\end{tabular}
\caption{Local codes imposed on edges. Each direction-$i$ edge has an designated endpoint of type $t_i=0$. The rescaling weights $D_e$ are specified by the designated endpoint. The permutaion $P$ is applied when the designated endpoint has type $t_j=1$ for the other direction $j\ne i$.}
\label{tab:edge-checks}
\end{table}

% \textbf{The global code.} Define
% \begin{equation}\label{eq:code}
% \mathcal C=\{z\in \E^{\mathcal Q}: z(Q_e) \text{ belongs to its listed edge code for every }e\}.
% \end{equation}
For example, on a direction-$1$ edge $e$ between types $0t_2$ and $1t_2$, the check at $e$ requires that there exists a homogeneous degree-$d_2$ polynomial $f$ satisfies
\begin{equation}\label{eq:edgepoly}
z(Q_e(b))=\nu_w(Q_e(b))f(r_{b^{2^{ht_2
}}})\qquad(b\in A_2),
\end{equation}
where $w$ is the type-$0t_2$ endpoint of $e$.

\begin{remark}\label{rem:weight-intuition} We give some intuition for choosing the weights $\nu_v(Q)$ in~\eqref{eq:nu} and the permutation $P$.
  
First consider the construction without the permutation $P$. Consider again an edge $e=(w,v)$ of type $00\leftrightarrow 10$ and a square $Q=Q_e(b)$.
The matrices in Lemma~\ref{lem:construction}(4) provide the
vectors $c_2(v,Q)r_{\beta_e(b)}$ and $c_2(w,Q)r_b$. Since a homogeneous
degree-$d$ polynomial satisfies $f(cr)=c^df(r)$ for any scalar $c$, using the matrix identity in Lemma~\ref{lem:construction}(4) gives
\[
f(r_{\beta_e(b)})
=\left(\frac{c_2(w,Q)}{c_2(v,Q)}\right)^{d_2}f(B_e r_b).
\]
This suggests a weight contribution of the form $c_2(v,Q)^{d_2}$ at every vertex $v$. Applying the same intuition to direction-$2$ edges suggests another weight contribution of the form $c_1(v,Q)^{d_1}$, and thus an overall weight of $\nu_v(Q)=c_1(v,Q)^{d_1}c_2(v,Q)^{d_2}$ at every vertex $v$. However, this does not work because the direction-$2$ contribution $\frac{c_1(w,Q)^{d_1}}{c_1(v,Q)^{d_1}}$ can depend on $Q$ and affect the direction-$1$ rescaling guarantee. We therefore want to modify the exponents so that the two direction rules agree.

The key observation is that, using the triangular forms of the $F$ matrices in Lemma~\ref{lem:construction}(4) we can infer $\frac{c_1(w,Q)^{d_1}}{c_1(v,Q)^{m d_1}} \propto (\text{some }Q\text{-dependent quantity})^{(m +1) d_1}$. The latter becomes independent of $Q$ if $(s_1-1) \mid (m+1)d_1$ by Fermat's little theorem. At vertex $v$, replacing an evaluation
label $x$ by $x^{m}$ changes the associated scalar factor
from $c_1(v,Q)^{d_1}$ to $c_1(v,Q)^{m d_1}$. It turns out that we also want the map $x \mapsto x^m$ to be a field automorphism,
% (to induce an edge label permutation and to be $\mathbb{F}_2$-linear) 
and the Frobenius map provides such a map. Thus applying the permutation $P$ at vertices where
$t_i=1$ together with rescaling weights
$\nu_v(Q)=\prod_{i=1}^2 c_i(v,Q)^{2^{h t_i}d_i}$ works.
\end{remark}

\subsection{Local Views of Vertices in the 2d Construction}\label{sec:local-view}

In this section, we show that the edge constraints from the preceding subsection enforce that the local view of every vertex is a tensor code, up to rescaling and permuting coordinates.

\begin{figure}
\centering

\[\begin{tikzpicture}[x=1cm,y=1cm,font=\small,
  line cap=round,line join=round,
  dirone/.style={draw=blue!65!black,line width=.85pt},
  dirtwo/.style={draw=orange!80!black,line width=.85pt},
  dot/.style={circle,fill=black!70,inner sep=1.35pt},
  flow/.style={-{Stealth[length=2.4mm]},draw=black!65,line width=.85pt}]

\node[font=\bfseries] at (2.5,6.4) {Squares around $v$};
\node[font=\bfseries] at (11,6.4) {Tensor code view $M$};

% Four squares of the product of two stars. The selected direction-1
% edge a and direction-2 edge b meet in the purple square Q_v(a,b).
\begin{scope}[shift={(2.5,1.2)}]
  \coordinate (v) at (0,0);
  \coordinate (a) at (-2,1);
  \coordinate (aa) at (-1.5,2.3);
  \coordinate (b) at (2,1);
  \coordinate (bb) at (1.5,2.3);
  \coordinate (aabb) at (0,4.6);
  \coordinate (abb) at (-.5,3.3);
  \coordinate (aab) at (.5,3.3);
  \coordinate (ab) at (0,2);

  \path[fill=black!2] (v)--(aa)--(aabb)--(bb)--cycle;
  \draw[dirone] (v)--(aa) (bb)--(aabb);
  \draw[dirtwo] (v)--(bb) (aa)--(aabb);

  \path[fill=blue!5] (v)--(a)--(abb)--(bb)--cycle;
  \draw[dirone] (v)--(a) (bb)--(abb);
  \draw[dirtwo] (v)--(bb) (a)--(abb);

  \path[fill=orange!7] (v)--(aa)--(aab)--(b)--cycle;
  \draw[dirone] (v)--(aa) (b)--(aab);
  \draw[dirtwo] (v)--(b) (aa)--(aab);

  \path[fill=violet!12] (v)--(a)--(ab)--(b)--cycle;
  \draw[dirone] (b)--(ab);
  \draw[dirtwo] (a)--(ab);
  \draw[dirone,line width=1.8pt] (v)--(a)
    node[midway,below left=1pt,text=blue!65!black] {$a$};
  \draw[dirtwo,line width=1.8pt] (v)--(b)
    node[midway,below right=1pt,text=orange!80!black] {$b$};
  \node[text=violet!80!black] at (0,1) {$z(Q_v(a,b))$};

  \foreach \point in {v,a,aa,b,bb,aabb,abb,aab,ab}
    \node[dot] at (\point) {};
  \node[left=3pt,text=blue!65!black] at (aa) {$a'$};
  \node[right=3pt,text=orange!80!black] at (bb) {$b'$};
  \node[below=5pt] at (v) {vertex $v$ of type $t_1t_2$};
\end{scope}

% One invertible change of the whole local view: entrywise rescaling,
% followed by the prescribed permutations of the coordinate labels.
\draw[flow] (5.25,3.45)--(7.5,3.45);
\node[align=center] at (6.35,4.0) {Divide by $\nu_v(Q)$};
\node[align=center] at (6.35,2.9) {Permute rows\\and columns};

% The matrix is schematic; the highlighted row and column represent
% arbitrary edge checks, not a code of length four.
\path[fill=blue!9] (9,3.4) rectangle (13,4.2);
\path[fill=orange!12] (10,1.8) rectangle (11,5);
\path[fill=violet!20] (10,3.4) rectangle (11,4.2);
\foreach \x in {9,10,11,12,13}
  \draw[black!35,line width=.45pt] (\x,1.8)--(\x,5);
\foreach \y in {1.8,2.6,3.4,4.2,5}
  \draw[black!35,line width=.45pt] (9,\y)--(13,\y);
\draw[dirone,line width=1.15pt] (9,3.4) rectangle (13,4.2);
\draw[dirtwo,line width=1.15pt] (10,1.8) rectangle (11,5);
\node[text=violet!80!black] at (10.5,3.8) {$\bullet$};
\node[text=blue!65!black] at (12.5,3.8) {$\cdots$};
\node[text=orange!80!black] at (10.5,2.2) {$\vdots$};
\node[text=black!50] at (12.5,2.2) {$\ddots$};
\node[anchor=east,text=blue!65!black] at (8.85,3.8)
  {$a^{2^{h t_1}}$};
\node[above=5pt,text=orange!80!black] at (10.5,5)
  {$b^{2^{h t_2}}$};

\draw[decorate,decoration={brace,amplitude=4pt},draw=blue!65!black]
  (13.18,5)--(13.18,1.8)
  node[midway,right=9pt,align=center,text=blue!65!black]
    {Every row\\in $C_2$};
\draw[decorate,decoration={brace,amplitude=4pt},draw=orange!80!black]
  (13,1.6)--(9,1.6)
  node[midway,below=9pt,text=orange!80!black]
    {Every column in $C_1$};

% \node[text=black!80!black] at (7.25,.15)
%   {$\displaystyle
%     M\bigl(a^{2^{h t_1}},b^{2^{h t_2}}\bigr)
%       =\frac{z(Q_v(a,b))}{\nu_v(Q_v(a,b))}$};
% \node at (7.25,-.85)
%   {$\text{All edge checks at }v
%      \quad\Longleftrightarrow\quad M\in C_1\otimes C_2$};
\end{tikzpicture}\]
    \caption{The local view of vertex $v$ is a tensor code up to rescaling and permutations.}
    \label{fig:placeholder}
\end{figure}

\begin{lemma}[Local views are Tensor Codes]\label{lem:local-tensor}
Fix a vertex $v$ of type $t_1t_2$.
An assignment $z$ to the squares containing $v$ satisfies
the checks in Table~\ref{tab:edge-checks} on every edge
incident to $v$ if and only if there exists
$M\in C_1\otimes C_2$ such that
\[
z(Q_v(a,b))
=\nu_v(Q_v(a,b))\,
 M\bigl(a^{2^{h t_1}},b^{2^{h t_2}}\bigr)
\qquad(a\in A_1,\ b\in A_2).
\]
\end{lemma}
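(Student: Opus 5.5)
Define the normalized local view $M(a^{2^{ht_1}},b^{2^{ht_2}}):=z(Q_v(a,b))/\nu_v(Q_v(a,b))$. This is well defined because the $\nu$'s are nonzero and $x\mapsto x^{2^h}$ is a bijection of $A_i$ (it fixes $\infty$). Since $C_1\otimes C_2$ is exactly the set of matrices whose columns lie in $C_1$ and whose rows lie in $C_2$, it suffices to show the following: for each incident direction-$1$ edge $e$ with label $a$, the check at $e$ holds iff the row $M(a^{2^{ht_1}},\cdot)\in C_2$; and symmetrically for direction-$2$ edges and columns. The argument for each direction is the same, so I will treat a direction-$1$ edge $e$ between $v$ and its neighbor $u$. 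There are two cases.

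\emph{Case 1: $v$ is the designated endpoint ($t_1=0$).} Here the squares $Q_e(b)=Q_v(a,b)$ are indexed by $v$'s own labels, and $D_e$ rescales by exactly $\nu_v$. By \eqref{eq:edgepoly}, the check says $z(Q_v(a,b))=\nu_v(Q_v(a,b))f(r_{b^{2^{ht_2}}})$ for some degree-$d_2$ form $f$. This is literally the statement that the row $M(a,\cdot)$ lies in $C_2$, so Case 1 is immediate.

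\emph{Case 2: $v$ is not designated ($t_1=1$).} Let $w$ be the other endpoint, of type $0t_2$, with edge label $a'$, so that $Q_w(a',b)=Q_v(a,\beta(b))$ where $\beta=\beta_e^{(2)}$. I need to show that $b\mapsto \nu_w(Q_w(a',b))f(r_{b^{2^{ht_2}}})$, reindexed at $v$ and divided by $\nu_v$, is again a $C_2$-codeword, with an invertible map $f\mapsto g$ so that the equivalence goes both ways. The plan has four steps.
\begin{enumerate}
\item Use $F_2(v,Q)=B_e^{(2)}F_2(w,Q)$ from \eqref{eq:compatibility}. Taking column $t_2+1$ gives $c_2(v,Q)r_{\beta(b)}=B_e^{(2)}c_2(w,Q)r_b$.
\item Raise this entrywise to the $2^{ht_2}$ power, which is legitimate because Frobenius is a field automorphism. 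Then apply the Möbius symmetry (Claim~\ref{lem:prs-mobius}) with the Frobenius-conjugated matrix. The $c_2$-factors cancel exactly against the $c_2^{2^{ht_2}d_2}$ part of $\nu$, and this produces a new form $g=f\circ(B_e^{(2)})^{(2^{ht_2})}$.
\item Handle the direction-$1$ factors. The ratio $c_1(w,Q)^{d_1}/c_1(v,Q)^{2^hd_1}$ remains and must be independent of $b$, i.e. of $Q\supset e$. Use the triangular forms: column $1$ of $F_1(w,Q)=P_{w,e}\begin{pmatrix}\eta&*\\0&\eta^{-1}\end{pmatrix}$ is $\eta$ times column $1$ of $P_{w,e}$, and column $2$ of $F_1(v,Q)=P_{v,e}\begin{pmatrix}\eta&0\\ *&\eta^{-1}\end{pmatrix}$ is $\eta^{-1}$ times column $2$ of $P_{v,e}$. (Here I take the orientation so that the lower-triangular form sits at the type-$1$ endpoint, as in \eqref{eq:compatibility}.) Hence $c_1(w,Q)=\eta\kappa_w$ and $c_1(v,Q)=\eta^{-1}\kappa_v$ with $\kappa$'s independent of $Q$. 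The ratio is then $\eta^{(1+2^h)d_1}$ times a constant.
\item Conclude with the divisibility hypothesis $s_1-1\mid(2^h+1)d_1$ and $\eta\in\F_{s_1}^\times$. These give $\eta^{(1+2^h)d_1}=1$, so the ratio is a global constant on the row, which can be absorbed into $g$.
\end{enumerate}
Reversing these steps gives the converse, since $B_e^{(2)}$ is invertible.

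The main obstacle is the bookkeeping in Case 2. Two points need care. First, the $2^{ht_2}$ Frobenius twist must commute correctly with the Möbius action, namely $(Br)^{2^h}=B^{(2^h)}r^{2^h}$ coordinatewise. Second, the exponents in $\nu$ must be matched to the correct types at $v$ versus $w$. For direction-$2$ edges the same computation goes through with the roles swapped.
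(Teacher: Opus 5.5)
Your proposal is correct and follows the paper's proof essentially step for step. It uses the same structure: the trivial designated-endpoint case, then for $t_1=1$ the relation $F_2(v,Q)=B_e^{(2)}F_2(w,Q)$ with Frobenius and homogeneity for the direction-2 factor, and the triangular forms with $s_1-1\mid(2^h+1)d_1$ to make the direction-1 factor constant along the row; you are also more explicit than the paper about the converse. One harmless slip: rewriting the row in $v$'s labels $\beta(b)$ gives, up to the constant factor, $g=f\circ\bigl((B_e^{(2)})^{-1}\bigr)^{(2^{ht_2})}$ rather than $f\circ(B_e^{(2)})^{(2^{ht_2})}$, which changes nothing since any invertible substitution preserves the space of degree-$d_2$ forms.
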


\begin{proof}
We need to show that division by $\nu_v(Q)$ turns each row into a codeword in $P^{t_2}C_2$, and each column into a codeword in $P^{t_1}C_1$. See Figure~\ref{fig:placeholder}.

First consider the rows.
If $t_1=0$, the direction-$1$ edge checks are defined at $v$, so containment in $P^{t_2}C_2$ follows directly from Table~\ref{tab:edge-checks}. Suppose $t_1=1$, and consider a fixed edge $e = (w, v)$, so $w$ has type $0t_2$ and is the designated endpoint of $e$. Let $B_e^{(2)}$ be the matrix from Lemma~\ref{lem:construction}(3) and $\beta_e^{(2)}$ be the associated edge label permutation. Let us omit the direction superscript because there is only one direction transverse to $e$, i.e., we will write $B_e, \beta_e$. Consider a label $b \in A_2$ and the square completed by an edge labeled by $b$ at $w$, $Q=Q_e(b)$. 
Let $a_w,a_v$ be the labels of $e$ at $w,v$, respectively, so $Q=Q_v(a_v,\beta_e(b)) = Q_w(a_w,b)$.

The edge check at $e$ requires
\[
z(Q)=\nu_w(Q) f(r_{b^{2^{h t_2}}}),
\]
where $f$ is homogeneous of degree $d_2$ independent of $b$. 
% Our goal is to show that the rescaled row
% \begin{align}
%    \frac{\nu_w(Q)f(r_{b^{2^{h t_2}}}) }{\nu_v(Q)}, \qquad Q = Q_e(b), b\in A_2,  \label{eq:scaled-row}
% \end{align}
% lies in $P^{t_2}C_2$.

Since $w$ has type $0t_2$ and $v$ has type $1t_2$,
the weights according to ~\eqref{eq:cdefinition},~\eqref{eq:nu} are
\[
\nu_w(Q)=c_1(w,Q)^{d_1}c_2(w,Q)^{2^{h t_2}d_2},
\qquad
\nu_v(Q)=c_1(v,Q)^{2^h d_1}c_2(v,Q)^{2^{h t_2}d_2}. 
\]
So the rescaled entry is
\begin{align}
\frac{z(Q)}{\nu_v(Q)} = \frac{c_1(w,Q)^{d_1}}{c_1(v,Q)^{2^h d_1}}
\left(\frac{c_2(w,Q)}{c_2(v,Q)}\right)^{2^{h t_2}d_2}  f(r_{b^{2^{h t_2}}})
\label{eq:rescaled}
\end{align}

We will show the following two claims:

\begin{claim}[Direction-1 Contribution is $b$-independent]\label{claim:direction-1}  $\frac{c_1(w,Q)^{d_1}}{c_1(v,Q)^{2^h d_1}} = \alpha_{e}$ independent of $Q$.
  
\end{claim}

\begin{claim}[Direction-2 Contribution is a Permuted Codeword]\label{claim:direction-2} $\left(\frac{c_2(w,Q)}{c_2(v,Q)}\right)^{2^{h t_2}d_2}  f(r_{b^{2^{h t_2}}}) = f'( r_{\beta_e(b)^{2^{h t_2}}})$, where $f'$ is a homogeneous degree-$d_2$ polynomial.
\end{claim}

We first use the claims to conclude the lemma before proving the claims. 

Substituting the claims into~\eqref{eq:rescaled} gives
\[ 
\frac{z(Q)}{\nu_v(Q)} = \alpha_e f'( r_{\beta_e(b)^{2^{h t_2}}}) =: g(r_{\beta_e(b)^{2^{h t_2}}}),
\]
where $g$ is a homogeneous degree-$d_2$ polynomial.
Recall that $Q=Q_e(b)=Q_v(a_v,\beta_e(b))$, hence changing the variable name $\beta_e(b)\mapsto b$ gives
\[ 
\frac{z(Q_v(a_v,b))}{\nu_v(Q_v(a_v,b))} = g(r_{b^{2^{h t_2}}}).
\]
Therefore,
\[
M\bigl(a_v^{2^{h t_1}},b^{2^{h t_2}}\bigr)
=\frac{z(Q_v(a_v,b))}{\nu_v(Q_v(a_v,b))}
=g(r_{b^{2^{h t_2}}}).
\]
Since $b\mapsto b^{2^{h t_2}}$ permutes $A_2$, this row of $M$
belongs to $C_2$.

Applying the argument to every
direction-$1$ edge gives all row constraints. Exchanging
directions gives all column constraints. We conclude that $M\in C_1\otimes C_2$.

We now prove the claims.

\begin{proof}[Proof of Claim~\ref{claim:direction-1}] This proof is where our choice of permutation $P$ and the degree divisibility condition in Definition~\ref{def:base-codes} are used.

By~\eqref{eq:cdefinition} and Lemma~\ref{lem:construction}(4),
\[
\begin{aligned}
c_1(w,Q)r_{a_w}
&=F_1(w,Q)\begin{pmatrix}1\\0\end{pmatrix} =P_{w,e}
  \begin{pmatrix}\eta&*\\0&\eta^{-1}\end{pmatrix}
  \begin{pmatrix}1\\0\end{pmatrix}
 =\eta P_{w,e}\begin{pmatrix}1\\0\end{pmatrix},\\
c_1(v,Q)r_{a_v}
&=F_1(v,Q)\begin{pmatrix}0\\1\end{pmatrix} =P_{v,e}
  \begin{pmatrix}\eta&0\\ *&\eta^{-1}\end{pmatrix}
  \begin{pmatrix}0\\1\end{pmatrix}
 =\eta^{-1}P_{v,e}\begin{pmatrix}0\\1\end{pmatrix}.
\end{aligned}
\]
Rearranging gives
\[
P_{w,e}\begin{pmatrix}1\\0\end{pmatrix}
 =\eta^{-1}c_1(w,Q)r_{a_w},
\qquad
P_{v,e}\begin{pmatrix}0\\1\end{pmatrix}
 =\eta c_1(v,Q)r_{a_v}.
\]
Thus the first column of $P_{w,e}$ is a scalar multiple
of $r_{a_w}$, and the second column of $P_{v,e}$ is a
scalar multiple of $r_{a_v}$. Denote these scalars by
$p_w$ and $p_v$. They are nonzero because the matrices
are invertible. Moreover, the matrices $P_{w,e}$, $P_{v,e}$ and the chosen
representatives $r_{a_w},r_{a_v}$ depend only on the edge
and its endpoints, so $p_w,p_v$ are independent of $Q$
even though $\eta$ and $c_1(w,Q),c_1(v,Q)$ may vary with $Q$.
Consequently,
\[
c_1(w,Q)=p_w\eta,
\qquad
c_1(v,Q)=p_v\eta^{-1}.
\]
Since $s_1-1\mid(2^h+1)d_1$, we have $\eta^{(2^h+1)d_1}=1$ by Fermat's little theorem, and hence,
\begin{align}
  \frac{c_1(w,Q)^{d_1}}{c_1(v,Q)^{2^h d_1}}
=\frac{p_w^{d_1}}{p_v^{2^h d_1}}\eta^{(2^h+1)d_1}
=\frac{p_w^{d_1}}{p_v^{2^h d_1}} =: \alpha_e.
\end{align}
\end{proof}

\begin{proof}[Proof of Claim~\ref{claim:direction-2}] The idea of this claim is to use the scaled permutation symmetry of projective Reed-Solomon codes from Claim~\ref{lem:prs-mobius}.

By Lemma~\ref{lem:construction}(4),
\[
F_2(v,Q)=B_eF_2(w,Q).
\]
Both vertices $v,w$ have second type bit $t_2$. Comparing column
$t_2+1$ on both sides and using~\eqref{eq:cdefinition} gives
\[
c_2(v,Q)r_{\beta_e(b)}
=c_2(w,Q)B_er_b.
\]

Rearranging,
\[
\frac{c_2(w,Q)}{c_2(v,Q)}r_b = B_e^{-1}r_{\beta_e(b)}
.
\]
Because $(x+y)^{2^h} = x^{2^h} + y^{2^h}$ for any field elements $x,y$ in characteristic $2$, we can take entrywise $2^{h t_2}$-th powers both sides and get
\begin{align}
\left(\frac{c_2(w,Q)}{c_2(v,Q)}\right)^{2^{h t_2}} r_{b^{2^{h t_2}}}= (B_e^{-1})^{2^{h t_2}}r_{\beta_e(b)^{2^{h t_2}}}.
\end{align}
Hence,
\[
  \left(\frac{c_2(w,Q)}{c_2(v,Q)}\right)^{2^{h t_2}d_2}
  f(r_{b^{2^{h t_2}}}) = 
  f\!\left(
    \left(\frac{c_2(w,Q)}{c_2(v,Q)}\right)^{2^{h t_2}} r_{b^{2^{h t_2}}}
  \right) =
  f\!\left((B_e^{-1})^{(2^{h t_2})}r_{\beta_e(b)^{2^{h t_2}}}\right).
\]
Letting $f'(r_x) := f((B_e^{-1})^{(2^{h t_2})} r_x)$ proves the claim.
\end{proof}

\end{proof}

\subsection{Generalization to Higher Dimensions}
\label{sec:four-dimensional-local-codes}

We now extend the 2-dimensional construction in the previous subsection to the 4-dimensional complex of
Lemma~\ref{lem:construction} by placing local codes on $3$-cubes. This essentially a straightforward extension.

\begin{definition}[Four-dimensional local codes] \phantom{}
\label{def:four-dimensional-local-codes}
We assign a local code to each cube (3-face) in the 4-dimensional complex as follows:
\begin{enumerate}
\item \textbf{Base codes.}
Let $s_i=2^{m_i}$ with $m_i\ge1$, let $\E$ be a finite field
containing every $\F_{s_i}$, and fix an integer $h\ge0$. Choose
\[
0\le d_i\le s_i-1,\qquad s_i-1\mid(2^h+1)d_i,
\qquad i\in[4],
\]
and set $A_i=\PP^1(\F_{s_i})$ and
$C_i=\PRS_{d_i}(\F_{s_i};\E)\subseteq \E^{A_i}$.
As before, $P$ permutes the evaluation labels by $x\mapsto x^{2^h}$ while
fixing $\infty$.

\item \textbf{Vertex weights.}
For a vertex $v$ of type $t_1t_2t_3t_4$, write
$Q=Q_v(a_1,a_2,a_3,a_4)$ for the $4$-cube with these edge labels at $v$,
and define $c_i(v,Q)\in\F_{s_i}^{\times}$ and $\nu_v(Q)$ by
\begin{equation}\label{eq:nu-four-dimensional}
\begin{aligned}
\text{column }t_i+1\text{ of }F_i(v,Q)&=c_i(v,Q)r_{a_i}
\qquad(i\in[4]),\\
\nu_v(Q)&=\prod_{i=1}^4 c_i(v,Q)^{2^{h t_i}d_i}.
\end{aligned}
\end{equation}

\item \textbf{Local codes.}
Assign a symbol $z(Q)\in \E$ to every $4$-face (hypercube) $Q$.
For a $3$-face (cube) $\tau$ missing direction $i$, its \emph{designated vertex}
is the unique vertex $w\in\tau$ whose type bits in directions $j\ne i$
are zero. Write $Q_{\tau}(b)$ for the $4$-face obtained by adjoining
the direction-$i$ edge labeled $b\in A_i$ at $w$.
In this coordinate order, let $D_{\tau}$ multiply the coordinate for
$Q_{\tau}(b)$ by $\nu_w(Q_{\tau}(b))$.
The local code is $C_{\tau}=D_{\tau}P^{t_i}C_i$, where $t_i$ is the
constant $i$th type bit on $\tau$, as listed in
Table~\ref{tab:three-cube-checks}.
\begin{table}[htbp]
\centering
\begin{tabular}{@{}cc@{}}
\toprule
Type bit $t_i$ on $\tau$ & Required code $C_{\tau}$\\
\midrule
$0$ & $D_{\tau}C_i$\\
$1$ & $D_{\tau}PC_i$\\
\bottomrule
\end{tabular}
\caption{Local codes on $3$-faces missing direction $i$, for every $i\in[4]$.}
\label{tab:three-cube-checks}
\end{table}

Thus the constraint at $\tau$ requires a homogeneous degree-$d_i$
polynomial $f$ such that
\[
z(Q_{\tau}(b))=\nu_w(Q_{\tau}(b))f(r_{b^{2^{ht_i}}})
\qquad(b\in A_i).
\]
The global code consists of assignments satisfying these constraints
on every $3$-face.
\end{enumerate}
\end{definition}

The next lemma extends Lemma~\ref{lem:local-tensor}
to the local view of any face.

\begin{lemma}[Local Views are Tensor Codes]\label{lem:local-tensor-four-dimensional}
Let $f$ be any face with direction set $S\subseteq[4]$, and fix a vertex
$v\in f$ of type $t_1t_2t_3t_4$.
An assignment $z$ to the $4$-faces containing $f$ satisfies the constraints
$C_{\tau}$ for every $3$-face $\tau$ containing $f$ if and only if there exists
$M\in\bigotimes_{i\notin S}C_i$ such that
\[
z(Q)=\nu_v(Q)\,
M\bigl((a_i^{2^{h t_i}})_{i\notin S}\bigr),
\qquad Q=Q_v(a_1,a_2,a_3,a_4)\supseteq f.
\]
\end{lemma}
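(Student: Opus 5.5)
The plan is to reduce to the two-dimensional argument of Lemma~\ref{lem:local-tensor}, one transverse direction at a time. Fix the face $f$ with direction set $S$ and the vertex $v\in f$. The $4$-faces containing $f$ are indexed, via the labels at $v$, by tuples $(a_i)_{i\notin S}\in\prod_{i\notin S}A_i$; the labels $a_i$ for $i\in S$ are fixed by $f$. Define $M$ on $\prod_{i\notin S}A_i$ by $M((a_i^{2^{ht_i}})_{i\notin S}) := z(Q)/\nu_v(Q)$. Since $x\mapsto x^{2^h}$ permutes each $A_i$ and $\nu_v(Q)\neq 0$, this is a well-defined invertible change of coordinates. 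It therefore suffices to show that, for each $j\notin S$, the constraint $C_\tau$ on each $3$-face $\tau\supseteq f$ missing direction $j$ is equivalent to the corresponding direction-$j$ line of $M$ lying in $C_j$. Indeed, $\bigotimes_{i\notin S}C_i$ is exactly the set of tensors all of whose axis-parallel lines lie in the respective codes. The $3$-faces $\tau\supseteq f$ missing direction $j$ are obtained by fixing the labels $a_i$, $i\notin S\cup\{j\}$, at $v$, so these lines are in bijection with such $\tau$.

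Fix such a $\tau$ with designated vertex $w$. Since $w$ and $v$ both lie in $\tau$, they are joined by a path in $\tau$ flipping only bits in directions $k\neq j$ with $t_k(v)=1$. We induct along the path $w=u_0\to u_1\to\cdots\to u_\ell=v$, each step an edge in some direction $k\ne j$ whose start has $k$th bit $0$. The inductive invariant is that $z(Q_\tau(b))/\nu_{u}(Q_\tau(b)) = g_u(r_{(\text{label}_u(b))^{2^{ht_j}}})$ for a homogeneous degree-$d_j$ polynomial $g_u$, where $\text{label}_u(b)$ is the direction-$j$ label at $u$ of the $4$-face $Q_\tau(b)$. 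At $u_0=w$ this is precisely the definition of $C_\tau$.

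For a step $e:u\to u'$ in direction $k$, we compute $\nu_u(Q)/\nu_{u'}(Q)$ for $Q\supseteq e$. This ratio factors into $c_k(u,Q)^{d_k}/c_k(u',Q)^{2^hd_k}$, which is independent of $Q$ by the argument of Claim~\ref{claim:direction-1} using the triangular forms of~\eqref{eq:compatibility}, times factors $(c_i(u,Q)/c_i(u',Q))^{2^{ht_i}d_i}$ for $i\ne k$. For $i\ne k,j$, the relation $F_i(u',Q)=B_e^{(i)}F_i(u,Q)$ shows that $c_i(u',Q)r_{\beta_e^{(i)}(a_i)}=c_i(u,Q)B^{(i)}_e r_{a_i}$. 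Since the labels $a_i$ are fixed along the line, the ratio $c_i(u,Q)/c_i(u',Q)$ is determined by $B_e^{(i)}$ and the fixed label, so it is constant along the line. For $i=j$, the same computation as in Claim~\ref{claim:direction-2}, applying Frobenius entrywise via $\F_2$-linearity, transforms $g_u$ into $g_{u'}=g_u\circ (B_e^{(j)})^{-1,(2^{ht_j})}$ evaluated at the new labels. This proves the invariant for $u'$, and at $u=v$ the line of $M$ lies in $C_j$. The converse direction runs the same invertible transformations backwards. Note that transversal labels in directions $i\in S$ are also fixed, and are handled identically.

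The main obstacle is the bookkeeping in the inductive step: checking that the factors from directions $i\ne j,k$ really are constant along the line (the ratio is determined by $B^{(i)}_e$ and the fixed label $a_i$), and that the type bit $t_j$ is unchanged along the path, so the Frobenius exponent $2^{ht_j}$ is the same at every vertex.
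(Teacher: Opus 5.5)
Your proposal is correct and follows essentially the same route as the paper. For each $3$-face $\tau\supseteq f$, you propagate the rescaled description of $C_\tau$ from its designated vertex to $v$ along edges of $\tau$: Claim~\ref{claim:direction-1} handles the flipped direction, the column comparison from $F_i(u',Q)=B_e^{(i)}F_i(u,Q)$ handles the other fixed directions, and Claim~\ref{claim:direction-2} handles the line direction. You then assemble the lines into a tensor codeword, exactly as the paper does. Your choice of a monotone path that only flips bits from $0$ to $1$ is a slightly cleaner way to guarantee the edge orientation that the triangular forms require.
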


\begin{proof} 
Fix a $3$-face $\tau$ missing direction $i$. We first show that its
constraint, divided coordinatewise by $\nu_v(Q)$ and expressed in
the labels at any $v\in\tau$, is exactly $P^{t_i}C_i$. Let $w$ be the
designated vertex of $\tau$. At $w$, dividing the constraint coordinatewise
by $\nu_w(Q)$ gives $P^{t_i}C_i$ by definition.

For any other vertex $v\in\tau$, choose a path from $w$ to $v$
along edges of $\tau$. We show that the desired description is
equivalent at the two endpoints of each edge, so it holds at $v$.
Consider on this path an edge $e:u\to u'$ in $\tau$ of direction $k\ne i$,
oriented from type bit $t_k=0$ to $t_k=1$.
Claim~\ref{claim:direction-1}, with direction $1$ replaced by $k$,
shows that the direction-$k$ factor in $\nu_u(Q)/\nu_{u'}(Q)$
is independent of $Q$. For $j\notin\{i,k\}$, the direction-$j$
edge labels at $u,u'$ are fixed by $\tau$. Comparing the corresponding
columns in Lemma~\ref{lem:construction}(4) shows that
$c_j(u,Q)/c_j(u',Q)$ is also independent of $Q$.
The only remaining factor is in direction $i$, where the proof of
Claim~\ref{claim:direction-2} applies with matrix $B_e^{(i)}$.
As in Claim~\ref{claim:direction-2}, the invertible linear substitution and the nonzero constant factors
give an equivalence between the rescaled codeword symbols at
$u$ and $u'$. Applying this equivalence successively along the chosen path proves the assertion at $v$.

Now fix $f$ and $v$ as in the statement. The $4$-faces containing $f$
are indexed by the labels $(a_i)_{i\notin S}$ at $v$.
Divide their symbols by $\nu_v(Q)$ and reindex each coordinate by
$a_i\mapsto a_i^{2^{ht_i}}$, then by the preceding paragraph, the
constraints on the $3$-faces containing $f$ become exactly the
$C_i$ constraints on every direction-$i$ line for each $i\notin S$.
This concludes the proof.
\end{proof}

\section{Rate, Distance, and Soundness Analysis}
\label{sec:dlv-parameters}
In this section, we define the quantum code using local constraint systems on the cubes described in the previous section. We instantiate the parameters and then establish
its rate, distance, and soundness by adapting the corresponding
arguments of Dinur--Lin--Vidick~\cite{DLV}, under the product expansion conjecture.

\subsection{Defining the Quantum Code}
\label{subsec:cubical-construction}
We define a cochain complex and quantum code using the cubical complex
of Lemma~\ref{lem:construction} and the local codes from
Definition~\ref{def:four-dimensional-local-codes}.

Recall from Lemma~\ref{lem:construction} that the cubical complex $X$ has vertex set
\[
X(0)=(G/H)\times\{0,1\}^4.
\]
For $i\in[4]$, let $B_i$ be the $(s_i+1)$-regular
graph on $G/H$ with edges $gH\sim gdH$ for $dH\in D_i/H$.
The direction-$i$ edges of $X$ follow $B_i$ and flip the $i$th type bit,
as in~\eqref{eq:coset-edges}.
Write $X(j)$ for the $j$-faces and $S(f)\subseteq[4]$ for the
direction set of a face $f$.

For the local code system, we use the field $\E$, base codes $C_i$, and local codes $C_{\tau}$
as in Definition~\ref{def:four-dimensional-local-codes}.
Write
$
Q(f)=\{Q\in X(4):f\subseteq Q\}.
$
For every face $f$, define the set of valid local views
\begin{equation}
\label{eq:merged-spaces}
F_f=\left\{z\in \E^{Q(f)}:
z|_{Q(\tau)}\in C_{\tau}
\text{ for every }\tau\in X(3)\text{ containing }f\right\}.
\end{equation}
Thus $F_{\tau}=C_{\tau}$ on $3$-faces and $F_Q=\E$ on $4$-faces. For faces $f\subseteq g$,
restriction to $Q(g)$ maps $F_f$ into $F_g$.
The cochain spacces are
\begin{equation}
\label{eq:merged-delta}
\mathcal C^j=\bigoplus_{f\in X(j)}F_f.
\end{equation}
The coboundary maps $\delta^j:\mathcal C^j\to\mathcal C^{j+1}$, for $0\le j<4$, are defined by
\begin{align}
(\delta^j x)(g)=
\sum_{\substack{f\subseteq g\\\dim f=j}}x(f)|_{Q(g)},  \qquad \text{for } x \in \mathcal{C}^j, \ g\in X(j+1)
\end{align}

If $f\subseteq g$ and $\dim g=\dim f+2$, there are exactly two
intermediate faces. The two compositions both restrict to $Q(g)$,
so they cancel in characteristic two.
Thus $\delta^{j+1}\delta^j=0$, giving the cochain complex
\[
\mathcal C^\bullet= \mathcal C^0\xrightarrow{\delta^0}\mathcal C^1
\xrightarrow{\delta^1}\mathcal C^2
\xrightarrow{\delta^2}\mathcal C^3
\xrightarrow{\delta^3}\mathcal C^4.
\]

By Lemma~\ref{lem:local-tensor-four-dimensional}, at each vertex
$v$, under the weights $\nu_v(Q)$ and label permutations
$a_i\mapsto a_i^{2^{ht_i}}$ we can identify
\[
F_f\cong\bigotimes_{i\notin S(f)}C_i
\qquad(f\ni v).
\]
% For a fixed vertex $v$, these identifications use the same weights and labels for every face $f$ containing $v$, restriction to a superface fixes the array indices corresponding to the added directions.

Choose an $\E$-basis for each $F_f$. In these bases, we define check matrices
\[
H_Z=\delta^2,\qquad H_X=(\delta^1)^{\mathsf T}
\]
and place the physical qudits on the coordinates of $\mathcal C^2$ in the chosen bases.

Since
\[
H_XH_Z^{\mathsf T}=(\delta^2\delta^1)^{\mathsf T}=0,
\]
these matrices define a CSS code over $\E$, encoding $K$ qudits
into $N$ qudits of alphabet $\E$, where
\[
N=\dim_{\E}\mathcal C^2,\qquad
K=\dim_{\E}(\ker\delta^2/\operatorname{im}\delta^1).
\]

The following claim is immediate.
\begin{claim}
    \label{claim:check-weights}
    The row and column weights of $H_Z, H_X$ are bounded whenever $s_i,\ i \in [4]$ are bounded.
\end{claim}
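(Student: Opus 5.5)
The plan is to bound the number of nonzero entries in each row and each column of $\delta^2$, regarded as a matrix whose rows are the coordinates of $\mathcal C^3$ and whose columns are the coordinates of $\mathcal C^2$, both expressed in the chosen bases. Since $H_X=(\delta^1)^{\mathsf T}$, the same counting applied to $\delta^1$ covers $H_X$. It therefore suffices to show that every $\delta^j$ is a block matrix in which each block row and each block column has a bounded number of nonzero blocks, and every block has bounded size.

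\textbf{Local finiteness of $X$.} By Lemma~\ref{lem:construction}(1), each vertex has exactly $s_i+1$ direction-$i$ edges, and edges in distinct directions at a vertex determine a unique cube. Hence the number of $k$-faces containing a vertex is at most $\prod_i (s_i+1)$ times the number of direction sets, which is bounded. Every face has $2^{\dim}$ vertices, so the number of $g\in X(j+1)$ containing a given $f\in X(j)$ is bounded. In the other direction, a $(j+1)$-face contains exactly $2(j+1)$ faces of dimension $j$.

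\textbf{Bounded block sizes.} By definition, $F_f\subseteq \E^{Q(f)}$, and $|Q(f)|\le\prod_{i\notin S(f)}(s_i+1)$. Thus $\dim_\E F_f$ is bounded, which is also visible from the identification $F_f\cong\bigotimes_{i\notin S(f)}C_i$. The block of $\delta^j$ indexed by $(g,f)$ is the restriction map $F_f\to F_g$ written in the chosen bases. It is nonzero only when $f\subseteq g$, and its dimensions are at most the bounded quantities $\dim F_g\times\dim F_f$.

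Combining these two facts, each row of $\delta^j$ meets at most $2(j+1)$ nonzero blocks, and each column meets a bounded number of them. Every block has bounded size, so all row and column weights are bounded by a function of the $s_i$ alone. The one subtlety is that the choice of bases must not matter. Any basis of a space of bounded dimension yields blocks with at most $\dim F_g\cdot\dim F_f$ entries, so the bound holds regardless of which bases are chosen. Nothing in this argument is a real obstacle; the only care needed is to check that uniqueness of cubes at a vertex really does bound the number of cofaces, and that follows directly from Lemma~\ref{lem:construction}(1).
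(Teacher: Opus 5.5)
Your argument is correct and is exactly the routine unpacking the paper has in mind when it calls this claim immediate (the paper gives no written proof). Local finiteness of $X$ from Lemma~\ref{lem:construction}(1) bounds the number of nonzero blocks in each block row and block column of $\delta^1,\delta^2$, and $\dim_{\E} F_f\le|Q(f)|\le\prod_i(s_i+1)$ bounds the block sizes independently of the chosen bases, which is all that is needed.
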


\subsection{Rate}
\label{subsec:merged-rate}
We bound the quantum code rate using the argument
of~\cite[Section~4]{DLV}.

\begin{lemma}[Rate]
\label{lem:merged-rate}
Write $R_i=\dim_{\E} C_i/(s_i+1)$.
If $
R_1,R_2<\mu$ and $R_3,R_4>1-\mu$, then
$\frac{K}{N}\ge \frac{2-(1+2\mu)^4}{24}$.
\end{lemma}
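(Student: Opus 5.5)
We bound the rate by counting dimensions, as in the rate argument of~\cite[Section~4]{DLV}. We have $K=\dim\ker\delta^2-\dim\operatorname{im}\delta^1\ge \dim\mathcal C^2-\dim\mathcal C^1-\dim\mathcal C^3$, since $\dim\ker\delta^2\ge\dim\mathcal C^2-\dim\mathcal C^3$ and $\dim\operatorname{im}\delta^1\le\dim\mathcal C^1$. So it suffices to show $\dim\mathcal C^2-\dim\mathcal C^1-\dim\mathcal C^3\ge \frac{2-(1+2\mu)^4}{24}\dim\mathcal C^2$.

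\textbf{Counting faces.} Write $n=|G/H|$ and $n_i=s_i+1$. Each vertex lies in $\prod_{i\in S}n_i$ faces of direction set $S$, and each such face has $2^{|S|}$ vertices. Hence the number of faces with direction set $S$ is $16n\prod_{i\in S}n_i/2^{|S|}$. By Lemma~\ref{lem:local-tensor-four-dimensional}, $\dim F_f=\prod_{i\notin S(f)}R_in_i$. Therefore
\[
\dim\mathcal C^j=16n\Bigl(\prod_{i=1}^4 n_i\Bigr)2^{-j}\sum_{|S|=j}\prod_{i\notin S}R_i=:16nN_0\,2^{-j}e_{4-j}(R),
\]
where $e_k$ is the $k$-th elementary symmetric polynomial in $R_1,\dots,R_4$.

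\textbf{Optimizing.} Substitute $R_1=x_1,R_2=x_2$ and $R_3=1-y_3,R_4=1-y_4$, with all $x_i,y_i<\mu$. We need to lower bound
\[
\tfrac14 e_2(R)-\tfrac12 e_3(R)-\tfrac18 e_1(R)
\]
relative to $\tfrac14 e_2(R)\le \tfrac14\binom42=\tfrac32$. At $x=y=0$ we have $R=(0,0,1,1)$, giving $e_1=2$, $e_2=1$, $e_3=0$, so the expression equals $\frac14-\frac14=0$. That is a degenerate baseline.

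So the sign must come from a different pairing. One option is $K\ge \dim\mathcal C^2-\operatorname{rank}\delta^1-\operatorname{rank}\delta^2$ with $\operatorname{rank}\delta^1\le\dim\mathcal C^1-\dim\mathcal C^0+\dim\mathcal C^{-1}$-type alternating bounds. The cleaner route is the Euler-characteristic-style bound $\operatorname{rank}\delta^1\le\dim\mathcal C^1-\operatorname{rank}\delta^0\le\dim\mathcal C^1-(\dim\mathcal C^0-\dim\ker\delta^0)$, and symmetrically for $\delta^2$ via $\delta^3$. In the DLV argument one instead uses the duality: $\ker\delta^2\supseteq$ and the codimension of $\operatorname{im}\delta^1$ are controlled by local acyclicity. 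I will follow DLV: with $R_1,R_2\approx0$ and $R_3,R_4\approx1$, the only faces carrying dimension near the limit are those with $S\supseteq\{1,2\}$. The main term $\dim\mathcal C^2$ comes from $S=\{1,2\}$, at $\frac14\prod_{i\notin S}R_i\approx\frac14$. The competing terms come from $\dim\mathcal C^1$ and $\dim\mathcal C^3$, which are each bounded using the alternating sum $\sum_j(-1)^j\dim\mathcal C^j=16nN_0\prod_i(R_i-\tfrac12)\cdot(\pm)$.

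Concretely, I expect the bound
\[
K\ge \Bigl|\sum_j(-1)^j\dim\mathcal C^j\Bigr|-(\text{error from }\mathcal C^0,\mathcal C^4\text{ contributions}),
\]
and then the product identity $\sum_j(-1)^j2^{-j}e_{4-j}=2^{-4}\prod_i(2R_i-1)$. Bounding $|2R_i-1|\ge 1-2\mu$ and the error terms by $(1+2\mu)^4$-type expressions, then normalizing by $\dim\mathcal C^2\le 6\cdot16nN_0/4$, gives the constant $\frac{2-(1+2\mu)^4}{24}$.

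\textbf{Main obstacle.} The hardest step is identifying exactly which homology-vanishing or rank inequality DLV use to discard the $\mathcal C^0,\mathcal C^4$ terms. Plausibly they use $\operatorname{rank}\delta^0\le\dim\mathcal C^0$ and $\operatorname{rank}\delta^3\le\dim\mathcal C^4$, so that $K\ge\dim\mathcal C^2-\dim\mathcal C^1-\dim\mathcal C^3+\dots$, with $(1+2\mu)^4$ arising from bounding $\prod(1+2\,\text{small})$. I would first verify the arithmetic of the alternating-sum identity and then match the constants.
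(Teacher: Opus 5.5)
Your face count $\dim\mathcal C^j=|X(4)|\,2^{4-j}e_{4-j}(R)$ is correct, and so is the Euler characteristic $\chi=\sum_j(-1)^j\dim\mathcal C^j=|X(4)|\prod_i(1-2R_i)$. You are also right that the naive bound $\dim\mathcal C^2-\dim\mathcal C^1-\dim\mathcal C^3$ degenerates. In fact it equals $\chi-\dim\mathcal C^0-\dim\mathcal C^4\le 0$ always, because $|\chi|\le|X(4)|=\dim\mathcal C^4$.

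\textbf{The gap.} The argument stops exactly where the content lies. The Euler characteristic only gives
\[
K\ge\chi-\dim H^0-\dim H^4,
\]
and you never bound $\dim H^0$ or $\dim H^4$; the ``error terms'' are left unspecified. Your tentative choice points the wrong way. The inequalities $\operatorname{rank}\delta^0\le\dim\mathcal C^0$ and $\operatorname{rank}\delta^3\le\dim\mathcal C^4$ are upper bounds on ranks, hence \emph{lower} bounds on homology. The trivial estimates $\dim H^0\le\dim\mathcal C^0$ and $\dim H^4\le\dim\mathcal C^4$ just return you to the nonpositive naive bound. What is actually needed is that $\delta^0$ is nearly injective and $\delta^3$ nearly surjective, and this requires the local tensor structure a second time.

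\textbf{Bounding $\dim H^4$.} The annihilator of $\operatorname{im}\delta^3=\sum_\tau C_\tau$ in $\E^{X(4)}$ is the dual global code $\{z: z|_{Q(\tau)}\in C_\tau^\perp\ \forall\tau\}$. Here $C_\tau^\perp=D_\tau^{-1}P^{t_i}C_i^\perp$, and $c^{s_i-1}=1$ for $c\in\F_{s_i}^\times$. So the inverse weights $\nu_w^{-1}$ are exactly the weights for degrees $s_i-1-d_i$, which still satisfy the divisibility condition. Hence Lemma~\ref{lem:local-tensor-four-dimensional} applies to this dual system. The sets $Q(f)$, for faces $f$ with $S(f)=\{1,2\}$ and $t_3=t_4=0$, partition $X(4)$. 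On each of them the word lies in a rescaled, permuted copy of $C_3^\perp\otimes C_4^\perp$, so $\dim H^4\le(1-R_3)(1-R_4)|X(4)|$.

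\textbf{Bounding $\dim H^0$.} Symmetrically, $\ker\delta^0$ injects into global words (compatible vertex views glue across each cube). Restricted to $Q(f)$, for $S(f)=\{3,4\}$ and $t_1=t_2=0$, these words lie in copies of $C_1\otimes C_2$, so $\dim H^0\le R_1R_2|X(4)|$.

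\textbf{Finishing.} Set $z=(R_1,R_2,1-R_3,1-R_4)\in[0,\mu)^4$, so that $\chi=|X(4)|\prod_i(1-2z_i)$. Then
\[
\frac{K}{|X(4)|}\ \ge\ \prod_i(1-2z_i)-z_1z_2-z_3z_4\ \ge\ 2-\prod_i(1+2z_i)\ \ge\ 2-(1+2\mu)^4 .
\]
The middle inequality uses $\prod_i(1-2z_i)+\prod_i(1+2z_i)=2+8e_2(z)+32e_4(z)$. This recovers the bound $K\ge|X(4)|(2-(1+2\mu)^4)$ that the paper obtains by adapting DLV's dimension count. Dividing by $N=4|X(4)|e_2(R)\le24|X(4)|$, as you noted, finishes the proof.
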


\begin{proof}
It is straightforward to extend the proof of \cite[Theorem~4.1]{DLV} to our setting. They assumed the same local codes with the same block length are used throughout their cubical complex. However, their proof only relies on the local views being tensor codes, which our construction satisfies. Adapting the dimension count in \cite[Lemma~4.2 and the proof
of Corollary~3.7]{DLV} gives
$K
\ge |X(4)|\left(2-(1+2\mu)^4\right).$
On the other hand, $
N=4|X(4)|\sum_{1\le i<j\le4}R_iR_j
\le24|X(4)|.$
Combining proves the claim.
\end{proof}

\subsection{Distance and Soundness}
\label{subsec:merged-expansion}

We next apply~\cite{DLV}'s expansion analysis to the cubical complex defined in Section~\ref{subsec:cubical-construction}. They use the expansion of the base graphs of the complex along with the
product expansion of the local tensor codes and their duals to conclude the (co)systolic distance and (co)cycle expansion of the complex via a local-to-global argument, which then translates to the distance and local testability of the associated quantum code.

In Section~\ref{sec:complex}, we establish the following:

\begin{lemma}[Spectral expansion]
\label{lem:directional-ramanujan}
For our cubical complex $X$ from Lemma~\ref{lem:construction}, each $B_i$ for $i \in [4]$ is connected and $X$ is $\lambda$-expanding for $\lambda = \max_{i \in [4]}\frac{2\sqrt{s_i}}{s_i+1}$.
\end{lemma}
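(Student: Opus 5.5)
The plan is to realize each direction-$i$ graph $B_i$ (vertices $G/H$, edges $gH\sim gdH$ for $dH\in D_i/H$) as a finite quotient of the $(s_i+1)$-regular Bruhat--Tits tree of $\PGL_2$ over a local field of residue field $\F_{s_i}$. Then I would import the Ramanujan bound from the arithmetic of the RSV lattice $\Gamma$, following the Lubotzky--Phillips--Sarnak and Morgenstern template. In the construction of Lemma~\ref{lem:construction}, $\cT$ is a product of four trees $\cT_1\times\cdots\times\cT_4$, and $\Gamma$ is an arithmetic lattice in $\prod_i \PGL_2(K_{v_i})$ coming from a quaternion algebra over a global function field of characteristic $2$. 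The first step is to observe that the bipartite graph $X_{i,\type}$ is exactly the bipartite double cover of $B_i$. Indeed, the direction-$i$ edges join $(gH,\type)$ to $(gdH,\type+\mathbf e_i)$, so $X_{i,\type}\cong B_i\times K_2$. Its nontrivial eigenvalues are $\pm\mu$ for $\mu$ an eigenvalue of the normalized adjacency of $B_i$. It therefore suffices to show that every eigenvalue of $B_i$ other than $\pm1$ (the trivial one, and possibly $-1$ if $B_i$ is bipartite) has absolute value at most $2\sqrt{s_i}/(s_i+1)$, and that $B_i$ is connected.

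For connectivity, I would use strong approximation. Fix all vertices of the other three trees; the stabilizer in $\Gamma$ is then an $S$-arithmetic group for the single place $v_i$. It is the norm-one (or projective) group of the quaternion order, and it is non-compact at $v_i$. Strong approximation says this group is dense in its image under reduction mod the chosen level ideal, which gives $\PSL_2$ or $\PGL_2(\F_{2^r})$. Hence the generators $D_i$ generate a subgroup of $G$ containing $\PSL_2(\F_{2^r})$ (in characteristic $2$, $\PSL_2=\PGL_2$), so $D_i$ generates $G$ and $B_i$ is connected. I would also check that the subgroup $H$ does not cause disconnection, which is automatic once the $D_i$ generate $G$.

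For the spectral bound, which I expect to be the main obstacle, I would argue as follows. The functions on $G/H$ orthogonal to constants decompose into irreducible constituents of $L^2(\Gamma'\backslash\PGL_2(K_{v_i}))$, where $\Gamma'$ is a congruence subgroup. The adjacency operator of $B_i$ is the Hecke operator at $v_i$. Its eigenvalues on non-trivial, non-one-dimensional constituents are controlled by the Satake parameters of automorphic representations of $D^\times$. Via Jacquet--Langlands these transfer to cuspidal automorphic representations of $\GL_2$ over a function field, where Drinfeld's proof of the Ramanujan--Petersson conjecture gives $|\lambda|\le 2\sqrt{s_i}$. Normalizing by the degree $s_i+1$ gives the bound.

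The delicate points are these. First, I must ensure that one-dimensional representations only contribute eigenvalues $\pm1$. This uses connectivity, together with the fact that the only other character is the sign of bipartiteness. Second, I must confirm that the RSV lattice is a congruence arithmetic lattice, so that Jacquet--Langlands and Drinfeld apply. I would cite RSV19 and Lubotzky--Samuels--Vishne (LSV1) for this identification, and also for the passage from the tree quotient $\Gamma\backslash\cT$ to the explicit coset description in Lemma~\ref{lem:construction}. Taking the maximum over $i\in[4]$ then yields the stated $\lambda$.
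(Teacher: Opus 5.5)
Your overall strategy is sound, but the reduction in your first paragraph is wrong as stated, and this is a genuine (though repairable) gap. Since $X_{i,\type}\cong B_i\times K_2$ has spectrum $\{\pm\mu\}$, an eigenvalue $-1$ of $B_i$ produces a second eigenvalue $+1$ of $X_{i,\type}$. The bipartite double cover of a bipartite graph is disconnected, so then $\lambda_2(X_{i,\type})=1$. Hence it does \emph{not} suffice to bound the eigenvalues of $B_i$ other than $\pm1$. You must also show that $B_i$ is non-bipartite (equivalently, that $X_{i,\type}$ is connected), and connectivity of $B_i$ alone does not give this. In your automorphic language, the sign character of the bipartition is a one-dimensional constituent you must exclude, not tolerate.

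The missing input is that reduction is surjective already on the bipartition-preserving part $\Lambda_i^+$ of the $\{v_i\}$-arithmetic group $\Lambda_i$; the paper's proof states exactly this. Here is why it suffices. Since $\Lambda_i$ is vertex-transitive on $\cT_i$, it contains a type-swapping $\lambda$. Choose $\lambda^+\in\Lambda_i^+$ with $\theta_i(\lambda^+)=\theta_i(\lambda)$. Then $\delta=(\lambda^+)^{-1}\lambda\in\Delta_i=\ker\theta_i$ swaps types. A path from $x$ to $\delta x$ in $\cT_i$ has odd length and projects to an odd closed walk in $B_i=\Delta_i\backslash\cT_i$. Your strong approximation step delivers this if you run it on the norm-one group: reduced norm $1$ forces $\val(\det)=0$, so these elements preserve types, and $\SL_2(\F_{2^r})=\PGL_2(\F_{2^r})$. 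But you have to say so and draw the conclusion. Also apply it to the full arithmetic group, not to $\Gamma$: in the paper $\Gamma$ is the kernel of reduction, whose image in $G$ is trivial.

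With this repaired, your spectral argument is a valid route that differs from the paper's. The paper does not rerun the automorphic argument; it cites Morgenstern's Theorem~3.2. That theorem is stated only for Morgenstern's particular lattices, namely elements admitting a representative $x$ whose coefficients, and those of $x^{-1}$, have denominators prime to $t$. RSV's kernel may contain elements $u^{m}y$ with $u^2=t$ and $m$ odd. So the paper passes to the index-$\le 2$ subgroup $\Delta_i^0$ of elements acting by even permutations under RSV's $\rho$, and checks that it lies in Morgenstern's family. It then transfers the bound from the cover $\Delta_i^0\backslash\cT_i$ to $B_i$ by pulling back eigenfunctions. Working directly with Jacquet--Langlands and Drinfeld, as you propose, needs no such normalization, since any congruence level is admissible. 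The price is that you must do the packaged work yourself: identify $\Delta_i\backslash\cT_i$ adelically with (a component of) a double coset space of $D^\times$, and justify Jacquet--Langlands over a characteristic-$2$ function field. That is exactly what Morgenstern's theorem provides as a black box.
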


\begin{lemma}[Distance and soundness]
\label{thm:merged-expansion}
Fix the field $\E$ and the parameters $s_1,\ldots,s_4$ bounded independently of $N$.
For every $\rho>0$ and $\beta\ge1$, there is
$\lambda=\lambda(\rho,\beta)>0$ with the following property.
Suppose $\max_i(s_i+1)/\min_i(s_i+1)\le\beta$, every nonempty subtuple
of $(C_i)$ and of $(C_i^\perp)$ is $\rho$-product-expanding,
and $X$ has spectral expansion $\lambda$.
Then the resulting qudit CSS code has distance $D=\Omega(N)$ and
soundness $\kappa>0$.
\end{lemma}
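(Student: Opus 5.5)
The plan is to reduce the lemma to the local-to-global theorems of \cite{DLV} as black boxes, checking that every hypothesis they place on the complex and on the local coefficient system holds in our setting. The first step is to isolate exactly which properties of the coefficient system $\{F_f\}$ the DLV analysis uses. Their proofs of (co)systolic distance and (co)cycle expansion for $\mathcal C^\bullet$ go through the link of each face $f$. In that link they need three things: the local complex of $f$ with coefficients $\{F_g\}_{g\supseteq f}$ is isomorphic, via an isomorphism that preserves Hamming supports, to the standard tensor cochain complex built from $(C_i)_{i\notin S(f)}$; the base graphs in each direction are $\lambda$-spectral expanders; and the degrees are balanced up to the factor $\beta$.

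The first of these comes from Lemma~\ref{lem:local-tensor-four-dimensional}. At a fixed vertex $v$, the identification $F_f\cong\bigotimes_{i\notin S(f)}C_i$ is obtained by a coordinatewise rescaling by the nonzero scalars $\nu_v(Q)$ together with the label permutations $a_i\mapsto a_i^{2^{ht_i}}$. This identification depends only on $v$ and $Q$, not on $f$. It therefore commutes with the restriction maps $F_f\to F_g$ that build $\delta$, and it preserves the support of every word. Since product expansion is a statement about supports and about the numbers of nonzero lines, $\rho$-product expansion of the subtuples of $(C_i)$ transfers verbatim to the local views.

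For the dual side, which is needed for the $X$-distance and for soundness, the local coboundary must be identified with the tensor complex of the duals $(C_i^\perp)$. Under the same monomial identification, the dual code of a rescaled code is rescaled by the inverse scalars. Support-wise it is therefore again a permuted copy of $C_i^\perp$, which by Claim~\ref{lem:prs-dual} is the PRS code $\PRS_{s_i-1-d_i}$. So the assumed expansion of $(C_i^\perp)$ applies to these dual local views as well.

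The second and third properties come directly from the construction and the hypotheses. Lemma~\ref{lem:directional-ramanujan} shows that $X$ is $\lambda$-expanding with each $B_i$ connected. The $s_i$ are bounded independently of $N$, so the locality and check weights stay constant by Claim~\ref{claim:check-weights}, and the degree ratio is at most $\beta$ by hypothesis. With all three properties in place, I invoke the DLV local-to-global theorem: for $\lambda\le\lambda(\rho,\beta)$, the complex $\mathcal C^\bullet$ has linear cosystolic distance and cocycle expansion at level $2$. Applying the same theorem to the dual chain complex gives linear systolic distance and cycle expansion. These translate into distance $D=\Omega(N)$ for both the $Z$ and $X$ sides, and into soundness $\kappa>0$ for $H_Z=\delta^2$ and $H_X=(\delta^1)^{\mathsf T}$.

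I expect the main obstacle to be a point where DLV is stated more narrowly than our setting. DLV uses one fixed local code per direction and Abelian Cayley links. Here each face carries its own rescaled and permuted copy, and the label orders differ between the two endpoints of an edge. I would handle this by checking that every step of DLV's proof is phrased either per link or in terms of supports and weights. If so, one can replace their global identification with our per-vertex identifications, as Lemma~\ref{lem:local-tensor-four-dimensional} allows. The places that need care are the dimension counts and the gluing arguments that compare views at two vertices of the same face. There, consistency holds because both identifications come from the same cochain $z$ and differ only by the nonzero scalars and label permutations established in Claims~\ref{claim:direction-1} and~\ref{claim:direction-2}.
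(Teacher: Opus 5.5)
Your proposal is correct and follows essentially the paper's route: you treat DLV's local-to-global theorem as a black box (Corollary 3.7 for the coboundary side, Proposition 8.1 on the dual complex for the boundary side), and feed it the per-vertex monomial identification of local views with tensor codes from Lemma~\ref{lem:local-tensor-four-dimensional}. The only difference is on the dual side. The paper re-applies that lemma to the construction with degrees $d_i'=s_i-1-d_i$, which still satisfy $s_i-1\mid(2^h+1)d_i'$ and whose weights equal $\nu_v(Q)^{-1}$ because $c^{s_i-1}=1$. You instead dualize the diagonal-plus-permutation identification directly, which is slightly cleaner and makes that check unnecessary.
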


\begin{proof} 
Again, by Lemma~\ref{lem:local-tensor-four-dimensional}, the local views
and incidence maps of our cochain complex have tensor code structures. This suffices to extend the proof of in~\cite[Corollary 3.7]{DLV}, giving linear distance and constant soundness in the coboundary direction, provided $\lambda$ is sufficiently
small in terms of $\rho, \beta$ (\cite{DLV} assume the directions have equal degrees, but unequal degrees only affect the constants only through $\beta$). To prove the boundary direction's distance and soundness,~\cite[Propostion 8.1]{DLV} performs an argument on a `dual' cochain complex where the local codes at cubes $(C_\tau)$ are replaced by their duals $(C_\tau^\perp)$. Thus, we only need to verify that the local views in this dual cochain complex are tensor codes. Indeed, $\PRS_{d_i}(\mathbb F_{s_i};\E)^\perp
=\PRS_{d'_i}(\mathbb F_{s_i};\E)$ with $d'_i =s_i-1-d_i$, and the dual codes  $(\PRS_{d'_i}(\mathbb F_{s_i};\E))$ satisfy the requirements in Definition~\ref{def:four-dimensional-local-codes}. Thus Lemma~\ref{lem:local-tensor-four-dimensional} applies.
\end{proof}

\subsection{Product Expansion of Reed-Solomon Codes}
In this section, we state our conjecture on the product expansion of Reed-Solomon codes.
For a finite set $S\subseteq \E$ and $0\le d<|S|$, write
\[
\RS_d(S;\E)=\{(f(x))_{x\in S}:f\in \E[T],\ \deg f\le d\}.
\]
Thus the subscript denotes the maximum polynomial degree, as in the PRS notation.

Fix $k\ge2$, finite sets $S_1,\ldots,S_k$, and write
\[
 \Omega=\prod_{i=1}^kS_i,\qquad
 N=|\Omega|=\prod_{i=1}^k|S_i|.
\]
An $i$-axis line is obtained by fixing every coordinate except the
$i$th.  For a code $C_i\subseteq\F_q^{S_i}$, let
\[
 L_i(C_i)
 =
 \F_q^{S_1}\otimes\cdots\otimes
 \F_q^{S_{i-1}}\otimes C_i\otimes
 \F_q^{S_{i+1}}\otimes\cdots\otimes\F_q^{S_k}.
\]
For $M_i\in L_i(C_i)$, let $\ell_i(M_i)$ be the number of nonzero $i$-axis lines.  

\begin{definition}[Product expansion]
\label{def:product-expansion}
A tuple $(C_i)_{i\in I}$, where $I\subseteq[k]$, is
$\rho$-product-expanding if every
$M\in\sum_{i\in I}L_i(C_i)$ has a decomposition
$M=\sum_{i\in I}M_i$, with $M_i\in L_i(C_i)$, such that
\[
 \wt(M)\ge\rho\sum_{i\in I}|S_i|\ell_i(M_i).
\]
\end{definition}

We now state our conjecture for product expansion of Reed-Solomon codes over binary extension fields,
\begin{conj}[Product expansion over Binary Extension Fields]
\label{conj:binary-pe}
For every $r\ge 2$, $\eta\in(0,1)$, and $\beta\ge 1$, there exists
$\rho=\rho(r,\eta,\beta)>0$ with the following property. Let $\E=\F_{2^e}$, and let $m_1,\ldots,m_r$ be positive integers such
that $2m_i\mid e$ for every $i$. Set $s_i=2^{m_i}$ and define\footnote{Since $2m_i\mid e$, we have $s_i+1\mid s_i^2-1\mid 2^e-1$.
Thus $|S_i|=s_i+1$ because $\E^\times$ is cyclic.}
\[
S_i \coloneqq \{x\in \E^\times : x^{s_i+1}=1\},
\qquad
|S_i|=s_i+1.
\]
Suppose that $s_1+1,\ldots,s_r+1$ are pairwise coprime and satisfy $\max_{i\in[r]}(s_i+1) \le \beta\min_{i\in[r]}(s_i+1).$ Then, for every choice of integers
$0\le d_i$ with $d_i+1\le (1-\eta)(s_i+1),$
the tuple
$\bigl(\RS_{d_i}(S_i;\E)\bigr)_{i=1}^r$
is $\rho$-product-expanding.
\end{conj}

\begin{lemma}[Product expansion of projective Reed--Solomon codes]
\label{lem:prs-rs-pe}
Let $\E,s_i,S_i$ be as in Conjecture~\ref{conj:binary-pe}, and let
$C_i=\PRS_{d_i}(\F_{s_i};\E)$, where $0\le d_i\le s_i-1$. Assuming
Conjecture~\ref{conj:binary-pe}, if
$
\eta\le\frac{d_i+1}{s_i+1}\le1-\eta$ for every $i$,
then all nonempty subtuples of $(C_i)$ and of $(C_i^\perp)$ are
$\rho$-product-expanding for a common $\rho>0$ depending only on
$r,\eta,\beta$.
\end{lemma}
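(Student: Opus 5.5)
The plan is to exhibit, for each $i$, an explicit coordinate bijection and coordinatewise rescaling that carries $\PRS_{d}(\F_{s_i};\E)$ onto a Reed--Solomon code $\RS_{d'}(S_i;\E)$ over the subgroup $S_i$. Since product expansion only involves Hamming weights and numbers of nonzero axis lines, it is invariant under such monomial maps applied separately in each direction. The conclusion then reduces to Conjecture~\ref{conj:binary-pe}.

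\textbf{Step 1 (projective line to norm-one subgroup).} Since $2m_i\mid e$, the field $\E$ contains $\F_{s_i^2}$, and $S_i$ is the kernel of the norm $\F_{s_i^2}^\times\to\F_{s_i}^\times$. Fix $\omega\in\F_{s_i^2}\setminus\F_{s_i}$ and consider the Möbius map $\phi:\PP^1(\F_{s_i})\to S_i$ given by $x\mapsto (x+\omega)/(x+\bar\omega)$, with $\bar\omega=\omega^{s_i}$ and $\infty\mapsto1$. This is well defined because $(x+\omega)^{s_i}=x+\bar\omega$ for $x\in\F_{s_i}$, so the ratio has norm one. It is injective as a Möbius transformation, hence bijective since both sets have size $s_i+1$.

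\textbf{Step 2 (rescaling).} Take a homogeneous $f$ of degree $d$. Pulling back along the inverse $2\times2$ matrix shows that $f(r_x)=(x+\bar\omega)^d\,g(\phi(x))$ for a univariate $g$ of degree at most $d$, where for $x=\infty$ the corresponding scalar is the leading coefficient. The correspondence $f\leftrightarrow g$ is a linear bijection on $(d+1)$-dimensional spaces over $\E$. Hence $\PRS_d=D\cdot\pi(\RS_d(S_i;\E))$, where $\pi$ is the coordinate permutation induced by $\phi$ and $D$ is the diagonal matrix of the nonzero scalars $(x+\bar\omega)^d$. The rate is preserved: $(d+1)/(s_i+1)$.

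\textbf{Step 3 (transfer of product expansion).} The tensor map $\bigotimes_i D_i\pi_i$ is a monomial map on $\E^\Omega$. It maps each $L_i(\RS)$ onto $L_i(\PRS)$ and preserves $\wt(M)$ and $\ell_i(M_i)$. Therefore any product-expanding decomposition transfers, and each subtuple of $(C_i)$ is $\rho$-product-expanding whenever the corresponding RS subtuple is.

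\textbf{Step 4 (applying the conjecture).} The condition $\frac{d_i+1}{s_i+1}\le1-\eta$ gives the primal tuples directly from Conjecture~\ref{conj:binary-pe}. For subtuples, apply the conjecture with $r$ replaced by $|I|\ge2$. The pairwise coprimality and ratio $\beta$ hypotheses are inherited by subtuples, and singletons are trivial with $\rho=1$ by taking $M_i=M$. For the duals, Claim~\ref{lem:prs-dual} gives $C_i^\perp=\PRS_{s_i-1-d_i}$, whose rate satisfies $\frac{s_i-d_i}{s_i+1}=1-\frac{d_i+1}{s_i+1}\le1-\eta$; this is where the lower bound $\eta\le\frac{d_i+1}{s_i+1}$ is used. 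Taking the minimum of $\rho(r',\eta,\beta)$ over $2\le r'\le r$ gives a common $\rho$.

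The main obstacle is Step 2: making the homogeneous-to-univariate identification exact, including the point at infinity and the requirement that the scalars are nonzero. I would verify it by writing $\phi$ as the matrix $\begin{pmatrix}1&\omega\\1&\bar\omega\end{pmatrix}$, which is invertible over $\F_{s_i^2}$, and invoking the same substitution argument as in Claim~\ref{lem:prs-mobius}, now over $\E$.
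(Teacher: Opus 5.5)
Your proposal follows the paper's proof. You use the same M\"obius map $x\mapsto (x+\omega)/(x+\bar\omega)$, which coincides with the paper's $(x-\theta)/(x-\theta^{s})$ in characteristic two. You use the same rescaling by $(x+\bar\omega)^{d}$, with scalar $1$ at $\infty$ because the matrix sends $r_\infty$ to $r_1$. The transfer goes through the same coordinatewise monomial equivalence, and the duals are handled through Claim~\ref{lem:prs-dual} and the lower bound $\eta\le\frac{d_i+1}{s_i+1}$.

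Your treatment of subtuples is more explicit than the paper's, which only invokes the conjecture (stated for $r\ge2$). To apply the $|I|$-dimensional conjecture inside the ambient array, run it on each slice obtained by fixing the coordinates outside $I$.

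One claim is wrong as stated: a singleton $(C_i)$ is not $1$-product-expanding. Each nonzero $i$-axis line of $M\in L_i(C_i)$ is only a nonzero codeword of $C_i$, so it can have weight as small as the minimum distance. The correct constant is the relative distance. For $C_i$ this is $\frac{s_i+1-d_i}{s_i+1}\ge\eta$, and for $C_i^\perp=\PRS_{s_i-1-d_i}$ it is $\frac{d_i+2}{s_i+1}\ge\eta$. The fix is to take $\rho=\min\bigl(\eta,\min_{2\le r'\le r}\rho(r',\eta,\beta)\bigr)$, which still depends only on $r,\eta,\beta$.
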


\begin{proof}
We will show that $C_i$ and $C_i^\perp$ are equivalent, by coordinate permutations
and nonzero coordinate scalings, to
$\RS_{d_i}(S_i;\E)$
and $\RS_{s_i-1-d_i}(S_i;\E)$
respectively. Provided that $\eta\le\frac{d_i+1}{s_i+1}\le1-\eta$, the latter Reed-Solomon code tuples have product expansion from the conjecture. The equivalence then transfers this product expansion to that of the projective RS code tuples because product expansion only concerns Hamming weights.

We now show the equivalence between $C_i$ and $\RS_{d_i}(S_i;\E)$.
Fix $i$ and write $s=s_i$, $d=d_i$, and $S=S_i$. First, we identify the evaluation sets of PRS and RS codes. Choose
$\theta\in\F_{s^2}\setminus\F_s$ and define
\[
\beta(x)=\frac{x-\theta}{x-\theta^s},
\qquad \beta(\infty)=1.
\]
For $x\in\F_s$, we have
$\beta(x)^s=\frac{x-\theta^s}{x-\theta}$ (this follows from simple algebra, using $s$ is a power of characteristic and $x^s=x$, $\theta^{s^2}=\theta$), so $\beta(x)^{s+1}=1 \implies \beta(x) \in S$.
Since $\theta\ne\theta^s$, this is an invertible M\"obius
transformation. It therefore maps $\PP^1(\F_s)$ bijectively
onto $S$, both sets having size $s+1$.

Next, we change polynomial variables using the linear transformation
\[
B=\begin{pmatrix}1&-\theta\\1&-\theta^s\end{pmatrix}.
\]
For a homogeneous degree-$d$ polynomial $f$, set
$F=f\circ B^{-1}$ and $g(T)=F(T,1)$. The matrix $B$ is invertible, thus this gives a bijection between homogeneous degree-$d$
polynomials $f$ over $\E$ and univariate polynomials $g$ of degree
at most $d$.

Finally, let $\lambda_x=x-\theta^s$ for finite $x$ and
$\lambda_\infty=1$. These scalars are nonzero, and
$Br_x=\lambda_x r_{\beta(x)}$. Hence
\[
f(r_x)=F(Br_x)
=\lambda_x^d F(r_{\beta(x)})
=\lambda_x^d g(\beta(x)).
\]
Thus dividing coordinate $x$ by $\lambda_x^d$ and relabeling
it by $\beta(x)$ transforms the PRS codeword into
$(g(z))_{z\in S}$. Since every polynomial $g$ of degree at
most $d$ occurs, the resulting code is exactly
$\RS_d(S;\E)$.

The equivalence of $C_i^\perp$ and $\RS_{s_i-1-d_i}(S_i;\E)$ follows exactly the same argument by using Claim~\ref{lem:prs-dual}.
\end{proof}

\subsection{Proof of Main Theorem}
We prove the main theorem in this section by combining the results in preceding sections and choosing parameters.

We first choose the four base local codes satisfying the conditions required in the preceding lemmas.

\begin{definition}[Parameters for local codes]
\label{def:main-parameters}
Let $L$ be a sufficiently large positive multiple of $128$, and set
\[
(m_1,m_2,m_3,m_4)=(L+8,L+16,L+32,L+64),
\qquad s_i=2^{m_i}.
\]
Over the common field
$\E=\F_{2^{\,2\operatorname{lcm}(m_1,m_2,m_3,m_4)}}$, take $ C_i=\PRS_{d_i}(\F_{s_i};\E)$ where
\begin{equation}\label{eq:local-degree-condition}
d_i=
\begin{cases}
(s_i-1)/17,&i=1,2,\\
16(s_i-1)/17,&i=3,4.
\end{cases}
\end{equation}
These base codes satisfy Definition~\ref{def:four-dimensional-local-codes}
with $h=4$.
\end{definition}

We explain the choices. Since $8\mid m_i$, we have $17\mid s_i-1$, and
thus the degrees are integral and satisfy $s_i-1\mid17d_i$ as
required for the divisibility condition. We pick $h=4$ the smallest choice,
so that the first two code rates are below $1/16$ and the last two are above
$15/16$, giving the separation needed for the rate bound of Lemma~\ref{lem:merged-rate} to be nontrivial.

All four rates lie in $[1/17,16/17]$, so both the codes and their duals have rates bounded away from one. The largest powers of two dividing $m_1,m_2,m_3,m_4$ are
$8,16,32,64$, respectively. Moreover, 
$\frac{\max_i(s_i+1)}{\min_i(s_i+1)}<2^{56}.$ For Conjecture~\ref{conj:binary-pe} and Lemma~\ref{lem:prs-rs-pe} to apply, we finally need $s_1+1,\ldots,s_r+1$ to be pairwise coprime. Indeed, consider a pair $1\leq i<j \leq 4$ and the equality $s_i^{m_j}=s_j^{m_i}$. Taking $2^{i+2}$-root both sides give $ s_i^{\,L/2^{i+2}+2^{j-i}} = s_j^{\,L/2^{i+2}+1}.$ Because $128 \mid L$, the left exponent is even while the right exponent is odd. Now suppose for contradiction there existed an odd prime $p$ such that $p$ divides both $s_i + 1$ and $s_j+1$, then $s_i=-1 \mod p$ and $s_j=-1 \mod p$. Then $s_i^{\,L/2^{i+2}+2^{j-i}} = s_j^{\,L/2^{i+2}+1}$ implied $1=-1 \mod p$, a contradiction.

Note that increasing $L$ makes the spectral bounds Lemma~\ref{lem:directional-ramanujan} tend to zero while keeping the rate bounds and the degree ratio fixed. After choosing $L$, we keep it and $\E$ fixed as the global cubical complex grows.

With the above parameter choices, we can establish the main theorem.

\begin{theorem}[Main Theorem]\label{cor:main}
Assuming Conjecture~\ref{conj:binary-pe}, there exists
an infinite family of quantum CSS codes on $N$ qubits with $K=\Omega(N)$,
distance $D=\Omega(N)$, constant soundness, and constant check weights.
\end{theorem}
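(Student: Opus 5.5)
The plan is to assemble the pieces already established, with parameters fixed by Definition~\ref{def:main-parameters}. First fix $L$ (a large multiple of $128$) and the field $\E=\F_{2^{2\operatorname{lcm}(m_i)}}$. Lemma~\ref{lem:prs-rs-pe} applies to the local codes because of three facts. Each $2m_i$ divides $e=2\operatorname{lcm}(m_i)$. The numbers $s_i+1$ are pairwise coprime, by the parity argument after Definition~\ref{def:main-parameters}. And the ratios $(d_i+1)/(s_i+1)$ lie in $[\eta,1-\eta]$ for a fixed $\eta$ of order $1/17$, after a small adjustment for the $+1$ terms. Lemma~\ref{lem:prs-rs-pe} then says that all nonempty subtuples of $(C_i)$ and $(C_i^\perp)$ are $\rho$-product-expanding, with $\rho=\rho(4,\eta,\beta)$ and $\beta=2^{56}$. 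This $\rho$ does not depend on $L$, since $\eta$ and $\beta$ are uniform in $L$. Given $\rho$ and $\beta$, Lemma~\ref{thm:merged-expansion} provides a threshold $\lambda(\rho,\beta)$. Next choose $L$ large enough that $\max_i 2\sqrt{s_i}/(s_i+1)\le\lambda(\rho,\beta)$. Lemma~\ref{lem:directional-ramanujan} then makes $X$ $\lambda$-expanding, and this works because enlarging $L$ does not change $\rho$.

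With $L$ fixed, invoke Lemma~\ref{lem:construction} for arbitrarily large $r$ to get an infinite family of complexes $X_r$. Lemma~\ref{lem:construction} requires $r$ to be compatible with $\F_{s_i}\subseteq \F_{2^r}$ so that $D_i$ is defined; I would take $r$ ranging over multiples of $e$, with the stated construction giving infinitely many such $r$. On each $X_r$, place the local codes of Definition~\ref{def:four-dimensional-local-codes}; its divisibility condition holds with $h=4$, since $s_i-1\mid 17d_i$. Lemma~\ref{lem:local-tensor-four-dimensional} then provides the tensor structure used by the later lemmas. The parameters then follow one by one. Claim~\ref{claim:check-weights} gives constant check weights. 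Lemma~\ref{thm:merged-expansion} gives distance $\Omega(N)$ and constant soundness. Lemma~\ref{lem:merged-rate} with $\mu=1/16$ gives $K/N\ge(2-(9/8)^4)/24>0$: indeed $R_1,R_2\approx 1/17<1/16$ and $R_3,R_4\approx16/17>15/16$, once the $+1$ corrections are checked for large $s_i$.

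The remaining step, which I expect to be the main technical point, is converting the qudit code over $\E$ (alphabet size $2^e$, with $e$ a fixed constant) into a qubit code. I would apply the standard reduction: choose an $\F_2$-basis of $\E$ and its trace-dual basis, expand each $\E$-qudit into $e$ qubits, and map $X$-checks in one basis and $Z$-checks in the dual basis so that commutation is preserved. $\E$-linear checks become $\F_2$-linear checks on blocks of $e$ coordinates, so check weights grow by at most a factor of $e$. The number of logical qubits becomes $eK$ on $eN$ physical qubits, so the rate is unchanged. Distance drops by at most a factor of $e$, because a nonzero qudit error touches at least one and at most $e$ qubits per site. Soundness is preserved up to constants depending on $e$, because $\F_2$-distance to the code and $\E$-distance in blocks agree within a factor of $e$. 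Since $e$ is constant in $N$, all parameters stay $\Theta$ of their qudit values, which completes the theorem.
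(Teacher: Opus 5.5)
Your proposal is correct and follows the paper's proof essentially step for step. You obtain $\rho$ from Lemma~\ref{lem:prs-rs-pe} uniformly in $L$, take $L$ large via Lemma~\ref{lem:directional-ramanujan}, get rate, distance and soundness from Lemmas~\ref{lem:merged-rate} and~\ref{thm:merged-expansion} with $\mu=1/16$, and finish with an $\F_2$-expansion; your trace-dual bookkeeping there is equivalent to the paper's expanding $\delta^1,\delta^2$ over $\F_2$ and transposing over $\F_2$, which keeps $\delta^2\delta^1=0$ automatically. The one inaccuracy is harmless: the labels and the sets $D_i$ come from the residue fields $\F_{s_i}$ of $K_i$, not from $\F_{2^r}$, so you do not need to restrict $r$ to multiples of $e$.
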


\begin{proof}
A qudit CSS code over alphabet $\E$ with the stated properties directly follows from our construction. In particular, Lemma~\ref{lem:prs-rs-pe}, with $\eta=1/17$ and $\beta=2^{56}$,
gives a product-expansion constant $\rho>0$ independent of $L$
for all nonempty subtuples of the local codes and their duals.
By Lemma~\ref{lem:directional-ramanujan}, choosing $L$ sufficiently
large makes every $2\sqrt{s_i}/(s_i+1)$ smaller than the threshold
in Lemma~\ref{thm:merged-expansion}.
Fix such an $L$ and use the arbitrarily large complexes of
Lemma~\ref{lem:construction} together with the local codes above to construct the cochain complex as described in Section~\ref{subsec:cubical-construction}. Then
Lemma~\ref{lem:local-tensor-four-dimensional} and
Claim~\ref{claim:check-weights} establishes CSS construction with
bounded check weights and degrees.
Lemma~\ref{lem:merged-rate}, with $\mu=1/16$, gives $K=\Omega(N)$,
and Lemma~\ref{thm:merged-expansion} gives $D=\Omega(N)$ and
constant soundness.

To obtain a qubit code, we expand the cochain maps $\delta^2$, $\delta^1$ in an $\F_2$-basis of $\E$ and and define qubit check matrices using the expanded maps. This gives $N \log_2|\E|$ physical and $K \log_2|\E| $ logical qubits. The distance is at least $D$ and soundness decreases by at most a factor $O(\log_2|\E|)$. Each row and column weight of the check matrices increases by at most a factor $\log_2|\E|$. Since $\E$ is fixed, all claimed parameters follow.
\end{proof}

\section{The Non-Abelian Cubical Complex}\label{sec:complex}

The construction of the complexes below are due to Rungtanapirom, Stix, and Vdovina
\cite[\S2.1-2.4; \S6.2]{RSV19}. We begin by describing the infinite trees underlying the cubical complex $X$ and how to assign invertible $2 \times 2$ matrices to the edges incident to each vertex, which enables us to exhibit the matrices arising in property (4) of Lemma~\ref{lem:construction}. 

First, we set up some notation. We choose positive integers $m_1,\ldots,m_4$ divisible by $8$ and let
$s_i=2^{m_i}$. The field $K_i=\F_{s_i}((\pi_i))$ consists of formal
series $\sum_{n\ge N}a_n\pi_i^n$, where $N\in\mathbb Z$ and
$a_n\in\F_{s_i}$. The ring $\cO_i=\F_{s_i}[[\pi_i]]$ consists of series with only terms with
nonnegative exponents. For a nonzero series $a$, its valuation $\val_i(a)$ is the smallest
exponent with nonzero coefficient and reduction modulo $\pi_i$ keeps
only the constant coefficient of a series in $\cO_i$; we denote
this operation by $\overline{a} :=a\pmod{\pi_i}$. 

\paragraph{Bruhat-Tits tree and corresponding projective line edge labels.}
An \emph{$\cO_i$-lattice} is defined by
$L=\cO_i p\oplus\cO_i q$, where $p,q\in K_i^2$ are linearly
independent over $K_i$. We call $(p,q)$ an \emph{ordered basis} of $L$
and write $L=U\cO_i^2$ for the matrix $U=(p\;q)$.
We identify lattices that differ by multiplication by a nonzero
scalar in $K_i$, and write $[L]$ for the corresponding equivalence class.

The vertices of $\cT_i$ are these classes. Two vertices are adjacent
if they admit representatives $L,L'$ satisfying
\begin{equation}\label{eq:adjacent-lattices}
\pi_iL\subsetneq L'\subsetneq L.
\end{equation}
This graph is the \emph{Bruhat-Tits tree}
\cite[Chapter~II]{S80}. We define $\cT=\prod_{i=1}^4\cT_i$, where here the $\prod$ denotes the Cartesian product of graphs.

For a fixed $L$, every neighboring class
has a unique representative $L'$ satisfying
\eqref{eq:adjacent-lattices}. The map $L'\mapsto L'/\pi_iL$
identifies these neighbors with the one-dimensional subspaces
of $L/\pi_iL$. An ordered basis of $L$  in turn identifies $L/\pi_iL$ with $\F_{s_i}^2$
by reducing the two coordinates modulo $\pi_i$, so we can think of the neighbors of $L$ as corresponding to one-dimensional subspaces of this copy of $\F_{s_i}^2$.
The set of one-dimensional subspaces of $\F_{s_i}^2$ is called
the \emph{projective line}, denoted $\PP^1(\F_{s_i})$.
It has $s_i+1$ elements, and we can think of each of these elements as indexing a neighbor of $[L]$. Henceforth when we refer to an edge label we mean the corresponding projective line. 

A vertex $v\in\cT$ is a tuple $([L_1], [L_2], [L_3],[L_4])$. We describe it in coordinates by choosing a representative lattice and an ordered basis in each factor.  A direction-$i$ edge changes only the $i$th coordinate to a neighboring lattice class. For this edge, we can consider its label as a change of basis matrices that correspond to a choice of basis $V=(p\;q)$ and $V\diag(1,\pi_i)$ for its respective endpoints. These basis may differ from the bases chosen independently at the vertices; we will soon introduce \emph{comparison matrices} that describe how to relate these two choices.

\paragraph{Vertex types and group actions.}
For $L=U\cO_i^2$ for some $U\in\GL_2(K_i)$, we define the \emph{type} of $[L]$ to be
$\val_i(\det U)\bmod2$. A change of basis for the lattice multiplies
$\det U$ by an element of valuation zero, while rescaling $L$
changes its type by an even number. That is,
the type is well-defined and independent of the choice of representative of $L$.
For adjacent vertices, we can associate to the lattices at each of them the bases $(p,q)$
and $(p,\pi_iq)$. Their determinants differ by a factor of
$\pi_i$, so their determinant valuations differ by one.
In particular, it follows that every edge joins vertices of opposite types.

We write $\GL_2(K_i)$ for the invertible $2\times2$ matrices over
$K_i$. The group $\PGL_2(K_i)$ identifies matrices that differ
by a nonzero scalar and acts on tree vertices by
$[L]\mapsto[AL]$.  

% Its type-preserving subgroup is
% $\cG_i^+=\{[A]:\val_i(\det A)\equiv0\pmod2\}$.

% We define $\cT=\prod_{i=1}^4\cT_i$ and let $\cG=\prod_i\cG_i^+$.
% A vertex of $\cT$ has the four types of its component vertices.
% A cube is a product in which each factor is either a vertex
% or an edge.

\paragraph{Choosing coordinate bases for the lattices at endpoints of an edge.}
We fix direction $i$ and consider matrices $U,V\in \GL_2(K_i)$ whose corresponding lattices
represent the same tree vertex. That is, there exists $a\in K_i^\times$
such that $aV\cO_i^2=U\cO_i^2$. Then the matrix $M=U^{-1}aV$
and its inverse have entries in $\cO_i$, since they compare
two bases of the same lattice. We denote this group of matrices
by $\GL_2(\cO_i)$.

The reduction $\overline M\in\GL_2(\F_{s_i})$ expresses the edge labels obtained from $V$ in the coordinates obtained from $U$.
A convenient choice of normalization is to record this action using a determinant-one matrix: every nonzero element of $\F_{s_i}$ has a unique square root, so $\nu_i(R)=(\det R)^{-1/2}R$ is the unique determinant-one scalar multiple of $R$ which acts on one-dimensional subspaces in the same way as $R$. To that end, we define the \emph{comparison matrices}
\begin{equation}\label{eq:coordinate-comparison}
\mathsf C_i(U,V)
=\nu_i\!\left(\overline{U^{-1}aV}\right)
\in\SL_2(\F_{s_i}).
\end{equation}

To check that $\mathsf C_i(U,V)$ is well-defined, we note that any two admissible choices of $a$ differ by an invertible element
of $\cO_i$. Their reduction $\pmod \pi_i$ differ by a nonzero scalar,
which is then removed after the appropriate normalization. It follows that
$\mathsf C_i(U,V)$ is independent of $a$. Furthermore, rescaling either basis
matrix by an element of $K_i^{\times}$ also leaves this comparison unchanged.

% \paragraph{Computing the comparison matrix.}
% We compute $U^{-1}V$ and let $m$ be the minimum valuation of its
% nonzero entries. We then multiply by $\pi_i^{-m}$, reduce each
% entry modulo $\pi_i$, and normalize the determinant to one.
% To justify this procedure, we first write
% $U^{-1}V=bS$ for some $b\in K_i^\times$ and
% $S\in\GL_2(\cO_i)$. Some entry of $S$ has nonzero constant
% coefficient, so $m=\val_i(b)$ and
% $\pi_i^{-m}U^{-1}V\in\GL_2(\cO_i)$, as required.

\paragraph{Composition of change of basis and lifting from $X$ into $\cT$.}
For basis matrices $U,V,W$ representing the same tree vertex,
and $A\in\GL_2(K_i)$, by thinking of the $\mathsf C_i(U,V)$ as change of basis, we have the following composition laws of these comparison matrices
\begin{equation}\label{eq:coordinate-composition}
\begin{aligned}
\mathsf C_i(U,V)\mathsf C_i(V,W)&=\mathsf C_i(U,W),\\
\mathsf C_i(AU,AV)&=\mathsf C_i(U,V).
\end{aligned}
\end{equation}

For $X=\Gamma\backslash\cT$, a \emph{lift} of a vertex, edge,
or cube is a corresponding vertex, edge, or cube in $\cT$
mapping to it. The construction ensures that $\Gamma\le\cG$
acts freely on vertices: only the identity fixes a vertex.
Together with type preservation, this makes the element of
$\Gamma$ relating two lifts of a face unique.
We can transport the chosen lattice bases to the entire under this action.
Since changing a common lift then multiplies both basis matrices
by the same matrix in each factor, it follows from the second identity in
\eqref{eq:coordinate-composition} that we may compute the change of basis for faces of $X$ on any suitable lift in $\cT$.

\subsection{Proof of Lemma~\ref{lem:construction}}

Before proving Lemma 4.1, we give an informal sketch of its four properties. Properties (1) and (2) are essentially standard features of the construction. For (3) and (4), we lift to the product of trees, where the relevant matrices describe changes of lattice basis. 

\[\begin{tikzpicture}
  \path[use as bounding box] (0,-1.10) rectangle (13.9,7.45);
  \node[title] at (.15,7.12) {(3)};

  \coordinate (u) at (2.30,2.50);
  \coordinate (v) at (2.30,5.55);
  \coordinate (u1) at (.40,3.30);
  \coordinate (v1) at (.40,6.35);
  \coordinate (u2) at (4.60,3.60);
  \coordinate (v2) at (4.60,6.65);
  \coordinate (u3) at (5.10,1.90);
  \coordinate (v3) at (5.10,4.95);

  % Light fills separate the pages while keeping the common spine visible.
  \fill[teal,opacity=.065] (u)--(u1)--(v1)--(v)--cycle;
  \fill[ochre,opacity=.065] (u)--(u2)--(v2)--(v)--cycle;
  \fill[plum,opacity=.065] (u)--(u3)--(v3)--(v)--cycle;
  \draw[edge,teal!55] (u1)--(v1);
  \draw[edge,ochre!55] (u2)--(v2);
  \draw[edge,plum!55] (u3)--(v3);

  % A matching pair of transverse edges in each page.
  \draw[accent,teal] (u)--node[pos=.58,below=3pt] {$a_1$}(u1);
  \draw[accent,teal] (v)--node[pos=.58,above=3pt] {$a'_1$}(v1);
  \draw[accent,ochre] (u)--node[pos=.54,below=3pt] {$a_2$}(u2);
  \draw[accent,ochre] (v)--node[pos=.54,above=3pt] {$a'_2$}(v2);
  \draw[accent,plum] (u)--node[pos=.72,below=3pt] {$a_3$}(u3);
  \draw[accent,plum] (v)--node[pos=.72,below=3pt] {$a'_3$}(v3);

  \draw[line width=1.65pt] (u)--node[pos=.48,left=5pt] {$e$} node[pos=.48,right=7pt] {$B_e^{(j)}$}(v);
  \foreach \p in {u1,v1,u2,v2,u3,v3}
    \node[point,inner sep=1.1pt] at (\p) {};
  \node[point,label={[label distance=2pt]below left:$u$}] at (u) {};
  \node[point,label={[label distance=2pt]above left:$v$}] at (v) {};
  \node[text=teal] at (.92,4.63) {$Q_1$};
  \node[text=ochre] at (4.28,5.75) {$Q_2$};
  \node[text=plum] at (4.55,2.95) {$Q_3$};

  \node at (2.35,6.91) {$\mathcal B_j(v)$};
  \node at (2.30,1.65) {$\mathcal B_j(u)$};

  \draw[guide] (5.65,4.45)--node[above=4pt] {$\pi_j$}(8.15,4.45);

  % The spine collapses in the transverse tree factor; each page gives a
  % (possibly different) edge of the star at t.
  \coordinate (t) at (9.25,4.45);
  \coordinate (t1) at (12.35,6.50);
  \coordinate (t2) at (13.30,4.45);
  \coordinate (t3) at (12.35,2.40);
  \draw[accent,teal] (t)--node[pos=.64,above,sloped] {$E_j(Q_1)$}(t1);
  \draw[accent,ochre] (t)--node[pos=.65,above=4pt] {$E_j(Q_2)$}(t2);
  \draw[accent,plum] (t)--node[pos=.64,below,sloped] {$E_j(Q_3)$}(t3);
  \foreach \p in {t,t1,t2,t3} \node[point] at (\p) {};
  \node[above left=3pt] at (t) {$t$};
  \node[note] at (8.25,2.77)
    {$t=\pi_j(u)=\pi_j(v)$};
  \node[font=\normalsize] at (12.90,7.01) {$T_j$};

  \node[figcaption,text width=13.6cm] at (.15,.85) {
    In the lift on the RHS, each colored square $Q_r$ contains the fixed edge
    $e=[u,v]$. Its two transverse $j$-edges project to the same tree edge
    $E_j(Q_r)$; $a_r$ and $a'_r$ are their labels in the bases at $u$ and $v$.
    Since these bases are fixed, one matrix $B_e^{(j)}$ relates the labels
    across all the squares.
  };
\end{tikzpicture}\]

For (3), we fix an edge e and consider the squares containing it in another direction. The opposite edges of each square project to the same tree edge, whose labels we read using the bases at the two endpoints of $e$. Consequently, the same change-of-basis matrix works for the entire book of squares around $e$ because those vertex bases are fixed independently of the square. This is essentially the content of the $B_e^{(j)}$ matrices. 

\[ \begin{tikzpicture}
  \path[use as bounding box] (0,1.00) rectangle (14.9,8.05);
  \node[title] at (.15,7.85) {(4)};

  \coordinate (v00) at (1.25,4.10);
  \coordinate (v10) at (5.30,4.10);
  \coordinate (v01) at (1.25,7.05);
  \coordinate (v11) at (5.30,7.05);
  \fill[ink,opacity=.025] (v00)--(v10)--(v11)--(v01)--cycle;
  \draw[accent,teal] (v00)--(v10) (v01)--(v11);
  \draw[accent,ochre] (v00)--(v01) (v10)--(v11);
  \foreach \p in {v00,v10,v01,v11} \node[point] at (\p) {};
  \node[below left=3pt] at (v00) {$v_{00}$};
  \node[below right=3pt] at (v10) {$v_{10}$};
  \node[above left=3pt] at (v01) {$v_{01}$};
  \node[above right=3pt] at (v11) {$v_{11}$};
  \node[text=teal,below=5pt] at ($(v00)!.5!(v10)$) {$E_1$};
  \node[text=ochre,left=5pt] at ($(v00)!.5!(v01)$) {$E_2$};
  \node[font=\large] at (3.275,5.575) {$Q=E_1\times E_2$};

  % The coordinate-i endpoint basis is reused along the other direction.
  \node[anchor=south west] at (1.44,4.38)
    {$(\textcolor{teal}{b_1^0},\textcolor{ochre}{b_2^0})$};
  \node[anchor=south east] at (5.11,4.38)
    {$(\textcolor{teal}{b_1^1},\textcolor{ochre}{b_2^0})$};
  \node[anchor=north west] at (1.44,6.77)
    {$(\textcolor{teal}{b_1^0},\textcolor{ochre}{b_2^1})$};
  \node[anchor=north east] at (5.11,6.77)
    {$(\textcolor{teal}{b_1^1},\textcolor{ochre}{b_2^1})$};

  \draw[guide] (6.08,5.70)--(8.17,5.70);

  % Columns are indexed by E_2; rows are indexed by E_1.
  % The chosen row and column meet at the entry Q=E_1 x E_2.
  \fill[teal,opacity=.11] (9.02,5.08) rectangle (14.42,6.08);
  \fill[ochre,opacity=.13] (10.82,4.08) rectangle (12.62,7.08);
  \fill[plum,opacity=.12] (10.82,5.08) rectangle (12.62,6.08);
  \draw[edge] (9.02,4.08) rectangle (14.42,7.08);
  \foreach \x in {10.82,12.62}
    \draw[draw=hairline,line width=.5pt] (\x,4.08)--(\x,7.08);
  \foreach \y in {5.08,6.08}
    \draw[draw=hairline,line width=.5pt] (9.02,\y)--(14.42,\y);
  \draw[teal,line width=1pt] (9.02,5.08) rectangle (14.42,6.08);
  \draw[ochre,line width=1pt] (10.82,4.08) rectangle (12.62,7.08);
  \node[font=\large] at (11.72,5.58) {$Q$};
  \node at (9.92,7.48) {$\cdots$};
  \node[text=ochre] at (11.72,7.48) {$E_2$};
  \node at (13.52,7.48) {$\cdots$};
  \node at (8.65,6.58) {$\vdots$};
  \node[text=teal] at (8.58,5.58) {$E_1$};
  \node at (8.65,4.58) {$\vdots$};

  \node[figcaption,text width=14.6cm] at (.15,2.95) {
    The corner pairs $(b_1^a,b_2^b)$ are induced by ``change of basis'' matrices on
    $E_1$ and $E_2$. The square $Q=E_1\times E_2$ corresponds to the
    common entry of the teal row ($E_1$ fixed) and orange column
    ($E_2$ fixed). Both checks use the same basis at this entry.
  };

\end{tikzpicture}\]

For (4), we lift a square into $\cT$ as $\widetilde Q=E_1\times E_2$. For each $E_i$, we choose corresponding matrices that allows us to read off lattices corresponding to its endpoints. In the coordinates given by $V_i$, these lattices have the standard forms $\cO_i^2$ and $\diag(1,\pi_i)\cO_i^2$. At each vertex $v$ of the $Q$, the matrix $F_i(v,Q)$ essentially records the comparison between the (relative) choice of basis for $v$ as read off from the matrices assigned for $E_i$ and the choice of basis for the lattice representing $v_i$. The composition identities give the compatibility relations for both incident edges to $v$ using these same matrices (which is why we can construct the same weight $\nu_v(Q)$  for both columns and rows in Section~\ref{sec:local-view}).

\begin{proof}[Proof of Lemma~\ref{lem:construction}]
\emph{(1) Quotient structure of $X$.}
This follows essentially from the machinery already developed in \cite{RSV19}. The construction in \cite[\S~2.1-2.4]{RSV19}
supplies a group $\Lambda$ acting transitively on the
vertices of the product of trees $\cT$. The stabilizer $H_0$ of a base vertex (which can be thought of as the origin) is isomorphic to $S_3$, by
\cite[Proposition~2.1 and the proof of Proposition~2.21(1)]{RSV19}. 

% The strong approximation argument in \cite[Remark~6.10]{RSV19}
% gives a surjective reduction map from the type-preserving subgroup
% of $\Lambda$ onto $G=\PGL_2(\FF_{2^r})$, using the equality
% $\operatorname{PSL}_2(\FF_{2^r})=\PGL_2(\FF_{2^r})$.
% We let $\Gamma$ be its kernel and obtain a finite cubical complex
% $X=\Gamma\backslash\cT$. 

It is possible to embed $H_0$ as a subgroup conjugate to
$H=\PGL_2(\FF_2)$, and a standard argument allow us to identify $X$ with $(G/H)\times\{0,1\}^4$. In slightly more detail, let $\Lambda$ act on $\cT$ as in \cite{RSV19}, where $\Lambda$ is defined using an appropriate quarternion algebra, and let $\theta:\Lambda\to G$ be the group homomorphism corresponding to reduction at a irreducible polynomial of $\F_2[t]$ of sufficiently large degree \cite[Remark 6.10]{RSV19}. We also define $\tau:\Lambda\to\{0,1\}^4$ which records how the action by $\Lambda$ changes the vertex types of the tree,
and let $\Gamma=\ker\theta\cap\ker\tau$.
The ideas in \cite[Remark~6.10]{RSV19} show that
$\theta(\ker\tau)=G$. Since $\Lambda$ is vertex transitive, this shows in turn  that $(\theta,\tau)$ is onto. That is, each coset $G/H$ of every $\type$ vector occurs.
Moreover, the aforementioned reduction map can be shown to embed $H_0$ as a subgroup conjugate to
$H=\PGL_2(\FF_2)$, and $\tau(H_0)=0$. Consequently,
\[
 V(X)=\Gamma\backslash\Lambda/H_0
 \cong (G\times\{0,1\}^4)/(H\times\{0\})
 \cong (G/H)\times\{0,1\}^4.
\]
We may choose the irreducible polynomial of sufficiently large degree to not be amongst the finitely many
nonidentity elements so that $\theta(\lambda)\ne1$ for every nonidentity $\lambda\in\Lambda$ sending the base vertex $o$ to $\lambda o$ of distance at most 8 from it. By using the definition of $\Gamma$ as the intersection of kernels and the vertex transitivity of $\Lambda$, we can show that no two vertices at distance at most 8 are merged in the quotient, which ensures that the cubical structure around every vertex in the quotient $X$ is the same as its corresponding vertex in the infinite graph $\cT$.

The relations $D_iD_j=D_jD_i$ for $i\ne j$
come from the product structure of $\cT$: two successive steps in
distinct directions can be taken in either order to reach the same
vertex.

\emph{(2) Projective line labels for edges.}
In each direction the local construction is the same Bruhat-Tits
description that underlies the LPS graphs \cite{lubotzky1988ramanujan}: links are the lines in its two-dimensional residue space, showing the identification of the incident edges with $\PP^1(\F_{s_i})$. (Alternatively, as we have already observed this also follows from \cite[Chapter II]{S80}.)

\emph{Interlude: Lifting from the complex into the tree.}
For each vertex $v$ of $X$, we choose one lift $\widetilde v$ and,
in each factor $i$, a basis matrix $U_{\widetilde v,i}$ for a lattice
representing its $i$th coordinate. We extend these choices to the
other lifts by the action of $\Gamma$. More precisely, if
$A_i\in\GL_2(K_i)$ represents the $i$th component of
$\gamma\in\Gamma$, we set
$U_{\gamma\widetilde v,i}=A_iU_{\widetilde v,i}$. As observed in the preamble, we may carry out the computations on lifted vertices, edges, and cubes in $\cT$. %With the bases transported as above, the second identity in \eqref{eq:coordinate-composition} shows that changing the lift leaves the basis change comparison matrices unchanged, so they are well defined after descending to $X$.

\emph{(3) The matrix attached to a directed edge.}
Fix a direction-$i$ edge $e:v\to w$ and lift it to
$\widetilde e:\widetilde v\to\widetilde w$. For $j\ne i$ the endpoints
have the same $j$th tree coordinate. Hence their factor-$j$ basis matrices
span homothetic lattices, and we define
\begin{equation}\label{eq:edge-change}
 B_e^{(j)}=\mathsf C_j(U_{\widetilde w,j},U_{\widetilde v,j}).
\end{equation}
This matrix converts the direction-$j$ labels at $v$ into the coordinates
used at $w$. It depends only on the directed edge $e$ and the fixed vertex
coordinates.

In a lifted square containing $e$, the two opposite direction-$j$
edges project to the same edge of $\cT_j$. Their labels $a$ at $v$
and $\beta_e^{(j)}(a)$ at $w$ describe this edge in the respective
vertex coordinates. Since $B_e^{(j)}$ converts the coordinates at
$v$ to those at $w$, its definition in \eqref{eq:edge-change} gives
$\beta_e^{(j)}(a)=\langle B_e^{(j)}r_a^{(j)}\rangle$.
For the reverse edge $\bar e$, we exchange the two bases, giving
$B_{\bar e}^{(j)}=(B_e^{(j)})^{-1}$.

\emph{(4) Change of coordinates for cubes.}

We write $T_i=\diag(1,\pi_i)$. For a tree edge oriented from type zero
to type one, we choose representatives
$\pi_iL^0\subsetneq L^1\subsetneq L^0$.
A basis $V=(p\;q)$ of $L^0$ is \emph{adapted} to the edge if
$L^1=\cO_i p\oplus\pi_i\cO_i q$; its endpoint bases are $V$ and $VT_i$.
Such a basis exists by lifting a basis of $L^0/\pi_iL^0$ whose first
vector spans $L^1/\pi_iL^0$.

For each cube $Q$, we choose one lift $\widetilde Q$ and adapted bases
$V_{\widetilde Q,i}$ for its projected tree edges. For $v\in Q$, we define, in a common lift,
\begin{equation}\label{eq:frames}
 F_i(v,Q)=\mathsf C_i\bigl(
 U_{\widetilde v,i},V_{\widetilde Q,i}T_i^{\type_i(v)}\bigr).
\end{equation}
The invariance in \eqref{eq:coordinate-composition} makes this independent
of the choice of lift. In the adapted endpoint bases, the projective lines corresponding to the edges are given by
$\langle(1,0)^{\mathsf T}\rangle$ and $\langle(0,1)^{\mathsf T}\rangle$
respectively, since the type 0 vertex as viewed from type 1 is represented by $\pi_iL^0$.
It follows that column $\type_i(v)+1$ of $F_i(v,Q)$ spans the edge label.

For a direction-$i$ edge $e:v\to w$ in $Q$ and $j\ne i$, the cube
uses the same factor-$j$ endpoint basis at $v$ and $w$.
The composition identity gives
\begin{equation}\label{eq:transverse}
 F_j(w,Q)
 =\mathsf C_j(U_{\widetilde w,j},U_{\widetilde v,j})F_j(v,Q)
 =B_e^{(j)}F_j(v,Q).
\end{equation}

We now orient $e:v\to w$ from type zero to type one in direction $i$, and 
choose a corresponding matrix $V_{\widetilde e,i} = (p,q)$ (which we think of as assigning its two endpoints the choice of basis $L^0 = \cO_ip \oplus \cO_i q$ and $L^1 = \cO_i p \oplus \cO_i(\pi_i q)$)  independently of $Q$, and define
\begin{equation}\label{eq:edge-reference}
 P_{v,e}=\mathsf C_i(U_{\widetilde v,i},V_{\widetilde e,i}) \quad \text{and}\quad
 P_{w,e}=\mathsf C_i(U_{\widetilde w,i},V_{\widetilde e,i}T_i).
\end{equation}
For $Q\supset e$, we lift both into the same edge in $\cT$ and rescale
$V_{\widetilde Q,i}$ (since the lattices are defined up to homothety) so that both type-zero lattices equal
$L^0=V_{\widetilde e,i}\cO_i^2$.
Their type-one lattices also agree, by uniqueness of the neighbor
representative between $\pi_iL^0$ and $L^0$.
The matrix $M=V_{\widetilde e,i}^{-1}V_{\widetilde Q,i}$ lies in
$\GL_2(\cO_i)$ and preserves $T_i\cO_i^2$. As $(1 \hspace{0.8em} 0)^\top \in T_i \cO_i^2$, it follows that $M(1 \hspace{0.8em} 0)^T \in T_i \cO_i^2$ so its lower-left entry
is divisible by $\pi_i$. We obtain
\begin{equation}\label{eq:M-rep}
 M=\begin{pmatrix}a&b\\\pi_i c&d\end{pmatrix},
 \qquad
 T_i^{-1}MT_i=\begin{pmatrix}a&\pi_i b\\c&d\end{pmatrix},
 \qquad a,b,c,d\in\cO_i.
\end{equation}
Both reductions $\pmod{\pi_i}$ have determinant $\bar a\bar d\ne0$.
Normalizing the determinant to one, we get
\begin{equation}\label{eq:edge-bases}
 \begin{aligned}
 P_{v,e}^{-1}F_i(v,Q)
 &\underset{\eqref{eq:coordinate-comparison}}{=}\nu_i(\overline M)
 =\begin{pmatrix}\eta&*\\0&\eta^{-1}\end{pmatrix},\\
 P_{w,e}^{-1}F_i(w,Q)
 &\underset{\eqref{eq:coordinate-comparison}}{=}\nu_i\!\left(\overline{T_i^{-1}MT_i}\right)
 =\begin{pmatrix}\eta&0\\ *&\eta^{-1}\end{pmatrix},
 \end{aligned}
\end{equation}
where $\eta=\bar a/\sqrt{\bar a\bar d}$.
The matrices $P_{v,e},P_{w,e}$ and $B_e^{(j)}$ depend only on $e$
and the fixed choices, while $\eta$ and the starred entries may vary
with $Q$. Equations~\eqref{eq:transverse} and~\eqref{eq:edge-bases}
prove \eqref{eq:compatibility} on every edge, using the same matrices
\eqref{eq:frames}. This proves~(4).
\end{proof}
% \paragraph{Computing the matrices from finite data.}
% Once the labeled complex is given, $B_e^{(j)}$ can be recovered from
% three opposite-edge queries. Write $\infty=[1:0]$, $0=[0:1]$, and
% $1=[1:1]$. Choose nonzero vectors $u_\infty,u_0,u_1$
% representing $\beta_e^{(j)}(\infty),\beta_e^{(j)}(0),\beta_e^{(j)}(1)$,
% respectively, and solve $u_1=\lambda u_\infty+\mu u_0$.
% The three image lines are distinct, so $\lambda,\mu\ne0$.
% Then $B_e^{(j)}=\nu_j((\lambda u_\infty\;\;\mu u_0))$.
% Indeed, a projective transformation is determined by its action on three
% distinct points. This procedure uses the projectivity proved in~(3);
% an arbitrary permutation of the labels would not admit such a matrix.

% Once the matrices $B$ and the paired reference matrices $P$ have been
% chosen as above, one can also compute $F_i$ by a traversal of $Q$.
% Choose one direction-$i$ edge $e_0:v_0\to w_0$, oriented from type zero
% to type one in factor $i$, and set
% $F_i(v_0,Q)=P_{v_0,e_0}$ and $F_i(w_0,Q)=P_{w_0,e_0}$.
% On each of the two facets with fixed $i$th type, propagate across a
% directed edge $f:x\to y$ by $F_i(y,Q)=B_f^{(i)}F_i(x,Q)$.
% Within a lifted facet the $i$th tree coordinate is constant, so the
% comparison matrices telescope and the answer is independent of the path.
% This is the construction above with the cube's factor-$i$ basis chosen
% to be the reference basis of $e_0$. It requires no separate choice of a
% matrix at each vertex and no further compatibility tests inside the cube.

\subsection{Expansion of the cubical complex}
Finally, we prove the expansion properties of $X$
described in Lemma~\ref{lem:directional-ramanujan}.
RSV establish spectral expansion
for their complexes in odd characteristic
\cite[Theorem~6.12]{RSV19}. For completeness we provide the characteristic-two argument needed for our
construction using Morgenstern's spectral bounds on quotients of trees~\cite{Mor94}.

We write $B_i$ for the $(s_i+1)$-regular graph on $G/H$
with edges $gH\sim gdH$ for $dH\in D_i/H$. We prove the bound by identifying $B_i$ with a quotient
of $\cT_i$ and applying Morgenstern's theorem~\cite{Mor94}
to a graph cover $p:\widetilde B_i\to B_i$.
The covering map allows us to transfer the spectral bound to $B_i$.

% \begin{lemma}[Directional spectral expansion]
% For our cubical complex $X$, every $X_{i,\type}$ with $\type_i=0$
% is connected, and
% \begin{equation}\label{eq:directional-ramanujan}
%  \lambda_i(X)\le\frac{2\sqrt{s_i}}{s_i+1}
%  \qquad (1\le i\le4).
% \end{equation}
% \end{lemma}

\begin{proof}[Proof of Lemma~\ref{lem:directional-ramanujan}]
We first show that $B_i$ is connected. We fix $i$ and let $n_i=s_i+1$.
We recall the parts of the construction in \cite{RSV19} needed below. The complex $X=\Gamma\backslash\cT$ is a quotient of the Cartesian product $\cT=\cT_1\times\cdots\times\cT_4$ of four infinite trees. We fix a base vertex $o=(o_1, o_2, o_3,o_4)\in\cT$, with $o_j\in\cT_j$ to be the origin. For each $i$, we write $\Lambda_i\le\Lambda$ for the subgroup of elements whose action on $\cT_j$ fixes $o_j$ for every $j\ne i$. We write $\theta_i:\Lambda_i\to G$ for the finite quotient
map from the construction, restricted to $\Lambda_i$.
(This map reduces coefficients modulo an irreducible polynomial
of degree $r$, which maps into the residue field $\FF_{2^r}$.) The image of $\theta_i$ is a projective $2\times2$
matrix over $\FF_{2^r}$. The group $\Lambda_i$ acts transitively on $\cT_i$,
with stabilizer $H_0$ satisfying $\theta_i(H_0)=H$.
The strong approximation argument in \cite[Remark~6.10]{RSV19}
shows that $\theta_i$ remains surjective when restricted to elements
preserving the bipartition of $\cT_i$.
With $\Delta_i=\ker\theta_i$, we note that
$B_i=\Delta_i\backslash\cT_i$, and so $B_i$ is connected.

Morgenstern's result~\cite[Theorem~3.2]{Mor94} shows that
the degree-$(s_i+1)$ tree quotients in his construction have normalized adjacency eigenvalues
$|\mu|\le 2\sqrt{s_i}/(s_i+1)$ whenever $\mu\ne\pm1$.
We will construct a cover of $B_i$ in this family and
transfer its spectral bound to $B_i$.

To be able to apply Morgenstern's result, we must impose one
additional restriction on $\Delta_i$. We recall that our
quaternion coefficients belong to $\FF_2(t)$, the field
of rational functions in $t$. The restriction requires
each group element to admit a representative $x$ such
that the coefficients of both $x$ and $x^{-1}$, written
in lowest terms, have denominators not divisible by $t$
\cite[\S3]{Mor94}.

We use the auxiliary permutation action
$\rho:\Lambda\to S_3$ from \cite[\S2.6]{RSV19}
to describe this restriction. We define $\Delta_i^0
=\{\delta\in\Delta_i:\rho(\delta)\text{ is even}\}$. Since permutation parity is a homomorphism,
$\Delta_i^0$ is a subgroup of index at most two. We define $\widetilde B_i=\Delta_i^0\backslash\cT_i$. The construction ensures that the natural projection $p:\widetilde B_i\to B_i$ is a graph cover of degree $[\Delta_i:\Delta_i^0]\le2$. We now verify that $\Delta_i^0$ consists exactly of the elements admitting the representative required by Morgenstern's construction.

To that end, the local description in \cite[\S2.6]{RSV19} gives a quaternion
element $u$ with $u^2=t$. Every nonzero representative can be
written as $x=u^m y$, where $m\in\mathbb Z$ and the coefficients
of both $y$ and $y^{-1}$ have denominators not divisible by $t$.
Under $\rho$, the element $u$ acts by a transposition, while
$y$ acts by an even permutation. It follows that $\rho([x])$
is even exactly when $m$ is even. If $m=2a$, then
$x=t^a y$, so the scalar rescaling $t^{-a}x=y$ gives the required
representative. Conversely, every representative satisfying
the requirement acts by an even permutation.

Combining the above and applying \cite[Theorem 3.2]{Mor94}, we get 
\[
 \lambda_i(X) = \bigl\|R_i|_{\ell^2_0(W)}\bigr\|
 \le \frac{2\sqrt{s_i}}{s_i+1}.\qedhere
\]
\end{proof}

\bibliographystyle{alpha}
\bibliography{ref.bib}

%\appendix 

%\input{appendix}
\end{document}